\pdfoutput=1
\documentclass[10pt,a4paper]{article}
\usepackage{amssymb,amsmath,amsthm,mathtools,bm}
\usepackage{graphicx}
\usepackage{a4wide}
\usepackage{float}
\usepackage{tikz}
\usepackage[hidelinks]{hyperref}
\graphicspath{{./figures/}}
\newtheorem{theorem}            {Theorem}
\newtheorem{lemma}     [theorem]{Lemma}
\newtheorem{definition}[theorem]{Definition}

\newtheorem{example} [theorem]          {Example}

\title{Qualitative Stability and Synchronicity Analysis of Power Network Models in Port-Hamiltonian Form}
\author{Volker Mehrmann\footnotemark[1] \, and Riccardo Morandin\footnotemark[2] \, and Simona Olmi\footnotemark[3] \, and\\ Eckehard Sch\"oll\footnotemark[4]}
\begin{document}
\maketitle
\renewcommand{\thefootnote}{\fnsymbol{footnote}}
\footnotetext[1]{
Institut f\"ur Mathematik MA 4-5, TU Berlin, Str.~des 17.~Juni 136, D-10623 Berlin, FRG.
\textit{Email address: }\texttt{mehrmann@math.tu-berlin.de}. Supported by Deutsche Forschungsgemeinschaft via Project A2 within SFB 910 and by {\it Einstein Foundation Berlin} via the Einstein Center ECMath.
}
\footnotetext[2]{
Institut f\"ur Mathematik MA 4-5, TU Berlin, Str.~des 17. Juni 136, D-10623 Berlin, FRG.
\textit{Email address: }\texttt{morandin@math.tu-berlin.de}. Supported by {\it Einstein Foundation Berlin} via the Einstein Center ECMath.
}
\footnotetext[3]{
Institut f\"ur Theoretische Physik, Sekr.~EW 7-1, TU Berlin, Hardenbergstr.~36, D-10623 Berlin, FRG
\& CNR - Consiglio Nazionale delle Ricerche - Istituto dei Sistemi Complessi, 50019, Sesto Fiorentino, Italy.
\textit{Email address: }\texttt{simona.olmi@gmail.com}. Supported by Deutsche Forschungsgemeinschaft via Project A1 within SFB 910.
}
\footnotetext[4]{
Institut f\"ur Theoretische Physik, Sekr.~EW 7-1, TU Berlin, Hardenbergstr.~36, D-10623 Berlin, FRG.
\textit{Email address: }\texttt{schoell@physik.tu-berlin.de}. Supported by Deutsche Forschungsgemeinschaft via Project A1 within SFB 910.
}

\begin{abstract}
In view of highly decentralized and diversified power generation concepts, in particular with renewable energies such as wind and solar power, the analysis and control of the stability and the synchronization of power networks is an important topic that requires different levels of modeling detail for different tasks. A frequently used qualitative approach relies on simplified nonlinear network models like the Kuramoto model.
Although based on basic physical principles, the usual formulation in form of a system of coupled ordinary differential equations is not always adequate. We present a new energy based formulation of the Kuramoto model as port-Hamiltonian system of differential-algebraic equations. This leads to a very robust representation of the system with respect to disturbances, it encodes the underlying physics, such as the dissipation inequality or the deviation from synchronicity, directly in the structure of the equations, it explicitly displays all possible constraints and allows for robust simulation methods. Due to its systematic energy based formulation the model class allows easy extension, when further effects have to be considered, higher fidelity is needed for qualitative analysis, or the system needs to be coupled in a robust way to other networks.
We demonstrate the advantages of the modified modeling approach with analytic results and numerical experiments.
\end{abstract}

\renewcommand*{\thefootnote}{\arabic{footnote}}
\setcounter{footnote}{0}

\section{Introduction}
The increased percentage of renewable energies, such as wind and solar power, and the decentralization of power generation makes the stability and synchronization control of modern power systems
increasingly difficult. To address different control and optimization tasks, there are many different approaches to model power networks; we will briefly present a power grid model hierarchy of
differential-algebraic systems.  At the lowest levels of such a model hierarchy  simplified nonlinear network models like the Kuramoto model are placed which are often used  for a qualitative
analysis of the network behavior \cite{FilNP08,KunBL94}.

The usual formulation of the Kuramoto model in form of a coupled system of ordinary differential equations as in \cite{DorCB13,FilNP08,KunBL94,NisM15,OlmNBT14,OlmT16,RohSTW12,SalMV84} is, however, not always appropriate,
because physical properties like the conservation of energy and momentum, or Kirchhoff's node conditions are only implicity represented in the equations, and thus in numerical simulation or control
approaches they may be violated and lead to un-physical behavior. To prevent this, we present a new energy based formulation of the Kuramoto model as port-Hamiltonian system of differential-algebraic equations.
This model is a very robust representation of the system with respect to disturbances, since it encodes the underlying laws of physics in the algebraic and geometric structure of the equations.
It allows for the development of structure preserving methods that satisfy the physical laws after discretization and in finite precision arithmetic leading to robust simulation and control methods.
The energy based modeling approach allows for easy model refinement, as well as interconnection with other systems from different physical domains. We illustrate the new modeling approach with analytic
results and numerical experiments and  indicate how this approach can be generalized to also allow quantitative analysis.

The basis for our approach are {\em differential-algebraic equations (DAEs)}, also called  {\em descriptor systems} in the control context. They have become a paradigm for the modeling of systems
in different physical domains  and they are incorporated in automated modeling frameworks such as  {\sc modelica}\footnote{\url{https://www.modelica.org/}}.
Descriptor systems allow the explicit representation of constraints and interfaces in the model, see \cite{CamKM11,KunM06}.
In the most general nonlinear setting they have the form
\begin{equation} \label{nldesys}
  F ( t, x,\dot x, u) =0,
\end{equation}
typically together with an initial condition  $x(t_0)=x_0$ and an output equation
\begin{equation}
 y= G(t,x,u).  \label{nldestateout}
\end{equation}
Here, denoting by  $C^0({\mathbb I},{\mathbb R}^{m})$ the set of continuous functions
from a compact time interval $ \mathbb I\subseteq \mathbb R$ to ${\mathbb R}^{m}$, the function  $x$ represents the state, $u$ the input, and $y$ the output of the system. Although more general function spaces can be considered, we assume
that ${F}\in C^0({\mathbb I}\times{\mathbb D}_x\times{\mathbb D}_{\dot x}\times{\mathbb D}_u,{\mathbb R}^\ell)$ is sufficiently smooth, and that
${\mathbb D}_x,{\mathbb D}_{\dot x}\subseteq{\mathbb R}^n$,
${\mathbb D}_{u}\subseteq{\mathbb R}^m$, ${\mathbb D}_y\subseteq{\mathbb R}^p$ are open sets. This most general form of descriptor system is used in the general mathematical analysis and general numerical methods,
see \cite{KunM06}, but it does not display the explicit constraints, e.g.~balance laws, or interface conditions. Furthermore, there may exist hidden constraints or consistency requirements,
which makes further reformulations or regularizations necessary, see \cite{CamKM11,KunM06,LamMT13}.

Another important recent development is the use of \emph{energy based modeling} via bond graphs \cite{Bre08,CouJMTB08}, as implemented in the automated modeling package {\sc 20-sim}\footnote{\url{http://www.20sim.com/}}.
The resulting systems have a \emph{port-Hamiltonian  (pH) structure}, see e.g.~\cite{GolSBM03,JacZ12,OrtSMM01,Sch04,Sch06}, that encodes the underlying physical principles, such as conservation laws, passivity, or stability  directly into the algebraic and geometric structure of the system model.  Ordinary {\em pH systems} have the form
\begin{equation}
\begin{split}
\dot x&=\left(J-R\right)\nabla_{\!x}{\mathcal H}(x)+(B-P)u, \\
y&=\ (B+P)^T \nabla_{\!x}{\mathcal H}(x)  + (S+N) u. \label{PHdef}
\end{split}
\end{equation}
Here the \emph{Hamiltonian function} ${\mathcal H}(x)$  describes  the distribution of internal energy among energy storage elements of the system; $J=-J^T \in \mathbb R^{n,n}$ is the \emph{structure matrix}
describing energy flux among energy storage elements within the system; $R=R^T\in \mathbb R^{n,n}$ is the \emph{dissipation matrix} describing energy dissipation/loss in the system;  $B\pm P\in\mathbb{R}^{n,m}$ are \emph{port} matrices,  describing the manner in which energy enters and exits the system, and $S+N$, with $S=S^T\in \mathbb R^{m,m}$ and $N=-N^T \in \mathbb R^{m,m}$, describes the direct \emph{feed-through} from input to output.
All coefficients $J,R,B,S,N$ can depend on the state $x$ and also explicitly on the time $t$ and also can be infinite dimensional operators. Furthermore, for pH systems it is required that
\begin{equation} \label{Kdef}
W=\left[\begin{array}{lc}
R & P \\[1mm]
P^T & S
\end{array}\right] \geq 0,
\end{equation}
where we write $W>0$ (or $W\geq 0$) to denote that a real symmetric matrix $W$ is positive definite (or positive semi-definite). In contrast to \emph{Hamiltonian systems}, the \emph{conservation of energy} for Hamiltonian systems is replaced by the {\em dissipation inequality}
\begin{equation}  \label{DissipIneq}
 {\mathcal H}(x(t_1))-{\mathcal H}(x(t_0)) \leq \int_{t_{0}}^{t_{1}} y(t)^Tu(t)\ dt,
\end{equation}
which shows that \eqref{PHdef} is a {\em passive} system, see \cite{ByrIW91} and, since ${\mathcal H}(x)$ defines a Lyapunov function, minimal (in the sense of system theory) pH systems are implicitly Lyapunov stable \cite{HinP05,Wil72a,Wil72b}. A major advantage of pH systems in the context of power system modeling is that pH systems are closed under \emph{power-conserving interconnection}, which allows to build-up models in a modularized way, see \cite{CerSB07}, and  Galerkin projection  \cite{BeaG11,GugPBS12,PolS10}, which allows systematic discretization and model reduction.

To include interface conditions or node conditions like Kirchhoff's laws in a pH system, in \cite{BeaMXZ17,Sch13} ordinary pH systems have been extended to {\em port-Hamiltonian differential-algebraic equations (descriptor systems)} (pHDAEs), leading to the following definition, which we present here in the general linear time-varying form,
where we denote by $C^j({\mathbb I},{\mathbb R}^{m})$ the set of $j$-times continuously differentiable functions
from a compact interval $\mathbb I\subseteq \mathbb R$ to ${\mathbb R}^{m}$.
\begin{definition}\label{def:pHDAE}
A linear variable coefficient descriptor system of the form
\begin{equation}
\begin{split}
E \dot x &=\left [(J-R) Q -E T\right ]x + (B-P)u, \\
y&= (B+P)^T Q x  + (S+N) u, \label{pHDAE}
\end{split}
\end{equation}
with $E,Q\in C^1({\mathbb I},{\mathbb R}^{n,n})$, $J,R, T \in C^0({\mathbb I},{\mathbb R}^{n,n})$,
$B,P\in C^0({\mathbb I},{\mathbb R}^{n,m})$, $S,N\in C^0({\mathbb I},{\mathbb R}^{m,m})$ and $S=S^T$, $N=-N^T$
is called \emph{port-Hamiltonian differential-algebraic system (pHDAE)} if the following properties are satisfied:
\begin{itemize}
\item [i)]  For all $t\in \mathbb I$, $Q^T(t)E(t)=E^T(t)Q(t)\in C^{1}(\mathbb I, \mathbb R^{n,n})$ and
\begin{equation*}
 \frac{d}{dt}( Q^T(t) E(t)) = Q^T(t)[E(t) T(t) - J(t)Q(t)] + [E(t) T(t) - J(t)Q(t)]^TQ(t);
\end{equation*}
 \item[ii)] the \emph{Hamiltonian} function $
\mathcal H(x): = \frac 12 x^TQ^TEx:  C^1({\mathbb I},{\mathbb R}^{n})\to C^1({\mathbb I},\mathbb R)$
satisfies ${\mathcal H}(x(t))\geq h_0\in \mathbb R$ uniformly for all $t\in \mathbb I$ and all solutions $x$ of \eqref{pHDAE};
\item [iii)] for all $t\in \mathbb I$, $W=W^T\geq0$, where
\begin{equation} \label{Wdef}
W:=\left[\begin{array}{lc}
Q^T R Q& Q^T P \\[1mm]
P^T Q & S
\end{array}\right]\in C^0({\mathbb I},\mathbb{R}^{n+m,n+m}).
\end{equation}
\end{itemize}
\end{definition}
For a general nonlinear pHDAE of the form \eqref{nldesys}, and a Hamiltonian function
$\mathcal H(x)$, one requires that Definition~\ref{def:pHDAE} holds locally, i.e.~for a given input $u(t)$ and associated trajectory $x(t)$, the Hessian $\mathcal H_{xx}(x)$ can be expressed locally as $E^TQ$, where
$E= F_{\dot x}(t)$,  $F_x(t)=(J-R)Q -E T$, $F_u(t)=B-P$,
$G_x(t)= (B+P)^TQ$, $G_u(t)=S+N$, with $E,J,Q,R, T \in C^0({\mathbb I},{\mathbb R}^{n,n})$, $B,P\in C^0({\mathbb I},{\mathbb R}^{n,m})$,  $S=S^T$, $N=-N^T\in C^0({\mathbb I},{\mathbb R}^{m,m})$.
It has been shown in \cite{BeaMXZ17} that pHDAEs are invariant under time-varying equivalence transformations, they again satisfy the dissipation inequality \eqref{DissipIneq},
and they allow for structure preserving regularization and reformulation as  it was suggested for general DAEs in \cite{CamKM11,KunM06}.
We will not discuss these general results here, but address them in the specific context of power network models where the equations simplify significantly.

\section{A model hierarchy for power networks}\label{sec:Modelhiearchy}
In this section we briefly discuss a model hierarchy of several power network models, which we then turn into a pHDAE formulation in the following section.

Consider a power network of $n$ generators and loads, both represented by oscillators, connected through transmission lines.
If $A=[a_{jk}]\in\mathbb R^{n\times n}$ is the adjacency matrix, then the network can be described by a system of the form
\begin{equation}
\begin{alignedat}{3}
  m_j\dot\theta_j\ddot\theta_j &= - d_j\dot\theta_j^2 - v_jI_j + P_j, &\qquad& \text{for }1\leq j\leq k,\\
  L_{jk}\dot \imath_{jk} &= -R_{jk}\imath_{jk} + v_j - v_k, &\qquad& \text{for }1\leq j<k\leq n, \quad a_{jk}\neq 0,\\
  0 &= \textstyle-I_j + \sum_{k\neq j}a_{jk}\imath_{jk}, &\qquad&\text{for }1\leq j\leq n.\label{inst_power}
\end{alignedat}
\end{equation}
where $\theta_j$ is the \emph{phase angle} and $v_j=V_j\cos\theta_j$  the \emph{voltage} of the $j$-th oscillator, with $V_j>0$ being the \emph{voltage magnitude};  $I_j$ is the \emph{current} passing through the $j$-th oscillator entering the circuit;
$\imath_{jk}$ is the \emph{current through the transmission line} connecting oscillators $j$ and $k$;
$P_j\in\mathbb R$ is the \emph{exchange of power} of the $j$-th oscillator with the environment ($P_j>0$ for generators, $P_j<0$ for loads);
$m_j>0$ and $d_j>0$ are the \emph{angular mass} and the \emph{damping constant} of the $j$-th oscillator, respectively;  $L_{jk}>0$ and $R_{jk}>0$ are the \emph{inductance} and \emph{resistance} of the transmission line connecting oscillators $j$ and $k$, respectively. Moreover, we assume that $m_j$, $d_j$, $L_{jk}$ and $R_{jk}$ are constant in time. This model is called the \emph{instantaneous power model}, since the first equation of \eqref{inst_power} represents the power balance of each node at every instant of time.

In many real power network applications, one expects all angular velocities $\dot\theta_j$ to be close to a constant  frequency $\Omega$ most of the time. 
When this is satisfied, and $V_j,P_j$ do not vary too much, the system will be very close to steady state.
If we assume steady state, and $V_j,P_j$ to be constant in time, then
the second equation in \eqref{inst_power} can be solved explicitly,
and the electrical power $v_iI_j$ in the first equation of \eqref{inst_power} can be replaced by the real power.
This leads to the so-called \emph{real power model}
\begin{equation}\label{real_power}
  m_j\dot\theta_j\ddot\theta_j = - d_j\dot\theta_j^2 - \sum_{k=1}^nr_{jk}\cos(\theta_k-\theta_j)
  + \sum_{k=1}^ng_{jk}\sin(\theta_k-\theta_j) + P_j,
\end{equation}
for $j=1,\ldots,n$, where $R=R^T=[r_{jk}]\geq 0$, $G=G^T=[g_{jk}]\leq 0$, with coefficients $r_{jk}$ and $g_{jk}$ depending on $R_{jk},V_j,V_k,\Omega$ and $L_{jk},V_j,V_k,\Omega$ respectively, for $j,k=1,\ldots,n$.
Note that in this model the current $\imath$ is included only implicitly.

Since the entries $r_{jk}$ of the resistance matrix are usually negligible compared to the other coefficients, one could assumo $r_{jk}\equiv0$.
Furthermore, for small time intervals one may approximate $m_j\dot\theta_j\ddot\theta_j\approx m_j\Omega\ddot\theta_j$ and $d_j\dot\theta_j^2\approx d_j\Omega^2+2d_j\Omega\dot\theta_j$ (see \cite{FilNP08}). This leads (up to rescaling), to the system
\begin{equation}\label{eq:genkura}
  m_j\ddot\theta_j = -\tilde d_j\dot\theta_j + \sum_{k=1}^n \tilde g_{jk}\sin(\theta_k-\theta_j) + \tilde P_j,
\end{equation}
for $j=1,\ldots,n$, which we call the \emph{generalized Kuramoto model}. It is a generalization of the standard \emph{Kuramoto model} which consists of
a system of $n$ fully-coupled oscillators satisfying the equations
\begin{equation}\label{eq:stdkura0}
  m_j\ddot\theta_j + d_j\dot\theta_j = \Omega_j + K\sum_{k=1}^n\sin(\theta_k-\theta_j),
\end{equation}
for $j=1,\ldots,n$, where $\theta_j$ denote the phase angles, $m_j>0$ the masses, $d_j>0$ the damping constants, $\Omega_j$ the natural frequencies and $K>0$ the coupling constant of the system.
It should be noted that in \cite{FilNP08} the perturbations $\tilde\delta_j=\theta_j-\Omega t$ are used as variables in \eqref{eq:stdkura0}, instead of $\theta_j$. This can always be done for \eqref{eq:genkura}, up to changing $\tilde P_j$ by a constant, but not for \eqref{inst_power} or \eqref{real_power}.

We summarize all the mentioned models in a \emph{model hierarchy}, see Figure~\ref{fig:hierarchy},
\begin{figure}[htb]
  \centering
  \includegraphics[width=0.35\textwidth]{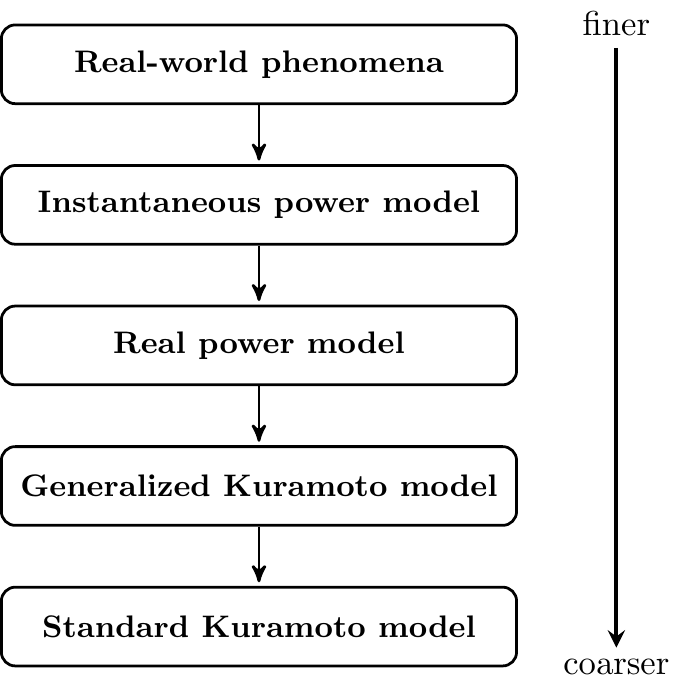}
  \caption{Model hierarchy for power networks}\label{fig:hierarchy}
\end{figure}
where  lower level models result from simplifications of the higher levels.
Using such a model hierarchy allows to adapt the model depending on the task, the accuracy requirements, or the allowed computation time.
For example, to check the stability properties of a synchronous state, the Kuramoto model may be used to achieve high computational efficiency, while the instantaneous power model may be required when accurate quantitative solutions are necessary.
Note that the model hierarchy is by no means complete and it should be extended, when further components (like e.g.~transformer stations) need to be included, or when the parameters of the system (like e.g.~the load) need to be modeled in a stochastic way.

\subsection{The order parameter}\label{sec:orderpara}
The standard Kuramoto model is frequently used  to study qualitatively the synchronization of power  networks
\cite{FilNP08,OlmNBT14,OlmT16}. For this one introduces the complex \emph{order parameter}
\[
  re^{i\phi} = \frac 1n\sum_{j=1}^ne^{i\theta_j}\in\{z\in\mathbb C:|z|\leq1\}.
\]
The value of $r\in[0,1]$ is $1$ when the system is in a fully synchronized state and it is $0$ when it is completely desynchronized. Note that
\[
  r^2 = re^{i\phi}\overline{re^{i\phi}} = \frac1{n^2}\sum_{j,k=1}^ne^{i(\theta_k-\theta_j)} = \frac1{n^2}\sum_{j,k=1}^n\cos(\theta_k-\theta_j),
\]
so introducing new variables $\omega_j\coloneqq\dot\theta_j$, $\rho_j\coloneqq\cos\theta_j$, $\sigma_j\coloneqq\sin\theta_j$, $j=1,\ldots,n$, we obtain
\[
  \sin(\theta_k-\theta_j) = \sigma_k\rho_j-\sigma_j\rho_k, \qquad
  \cos(\theta_k-\theta_j) = \rho_j\rho_j + \sigma_k\rho_k,
\]
and system \eqref{eq:stdkura0} takes the form
\begin{equation}
\begin{split}
  \dot\theta_j &= \omega_j, \\
  m_j\dot\omega_j &= \Omega_j - d_j\omega_j + K\rho_j\sum_{k=1}^n\sigma_k - K\sigma_j\sum_{k=1}^n\rho_k.\label{kuramoto_new}
\end{split}
\end{equation}
In this paper we focus on the bottom two levels of the model hierarchy and show that the generalized Kuramoto model has several advantages compared to the standard Kuramoto model.
However, to improve the robustness of the representations further, in the next section we will generate an energy based formulation of these two models as port-Hamiltonian differential-algebraic systems (pHDAE).

\section{PHDAE formulation of the Kuramoto models}\label{sec:daeph_kuramoto}
In this section we will reformulate the two Kuramoto models \eqref{eq:stdkura0} and \eqref{eq:genkura} as pHDAEs.
Introducing the vectors and matrices $\theta=[\theta_j]$, $\omega=[\omega_j]$, $\rho=[\rho_j]$, $\sigma=[\sigma_j]$, $\Omega=[\Omega_j]$, $M=\operatorname{diag}(m_j)$, $D=\operatorname{diag}(d_j)$, $D_\rho=\operatorname{diag}(\rho_j)$, $D_\sigma=\operatorname{diag}(\sigma_j)$ and $G=K e e^T$, 
where $e\in\mathbb R^n$ is the vector of all ones, we can write \eqref{kuramoto_new} as
\begin{equation}
\begin{split}
  \dot\theta &= \omega, \\
  M\dot\omega &= \Omega - D\omega - D_\sigma G\rho + D_\rho G\sigma. \label{kuramot_vec}
\end{split}
\end{equation}
Introducing the derivatives of $\rho$ and $\sigma$, we get
$\dot\rho = -D_\sigma\omega$, and $\dot\sigma = D_\rho\omega$.
Since computing $\rho$ and $\sigma$ is equivalent to determining $\theta$, we can discard the first equation in~\eqref{kuramot_vec} and obtain a simplified pHDAE formulation $E\dot x = (J-R)Qx + Bu$ of the standard Kuramoto model as
\begin{equation}\label{eq:stdkura1}
  \begin{split}
    M\dot\omega &= \Omega - D\omega - D_\sigma G\rho + D_\rho G\sigma, \\
    \dot\rho &= -D_\sigma\omega, \\
    \dot\sigma &= D_\rho\omega,
  \end{split}
\end{equation}
where $x=[\omega^T,\rho^T,\sigma^T]^T\in\mathbb R^{3n}$, $u=\Omega\in\mathbb R^n$, $E=\operatorname{diag}(M,I,I)=\mathbb R^{3n\times3n}$, $0\leq R=R^T=\operatorname{diag}(D,0,0)\in\mathbb R^{3n\times 3n}$, $Q=\operatorname{diag}(I,-G,-G)\in\mathbb R^{3n\times3n}$, and
\[
  J = -J^T=
  \begin{bmatrix}
    0 & D_\sigma & -D_\rho \\
    -D_\sigma & 0 & 0 \\
    D_\rho & 0 & 0
  \end{bmatrix}
  ,\qquad B =
  \begin{bmatrix}
    I \\ 0 \\ 0
  \end{bmatrix}.
\]
In contrast to the general formulation of pHDAEs in~\eqref{pHDAE}, we have that $E$ and $Q$ are constant in time, and $T=0$. The  \emph{Hamiltonian function} is
\begin{equation}\label{eq:stdkura:ham}
  \mathcal H(x) = \frac12x^TQ^TEx = \frac12\omega^TM\omega - \frac12\rho^TG\rho - \frac12\sigma^TG\sigma,
\end{equation}
and, in particular, we have
\begin{equation*}
  \frac12\rho^TG\rho + \frac12\sigma^TG\sigma
  = \frac12K\sum_{j,k=1}^n(\rho_j\rho_k+\sigma_j\sigma_k)
  = \frac12K\sum_{j,k=1}^n\cos(\theta_k-\theta_j)
  = \frac12Kn^2r^2,
\end{equation*}
so that $\mathcal H(x) = \frac12\omega^TM\omega - \frac12Kn^2r^2$.

Since we have omitted the dependence of $\rho$ and $\sigma$ on $\theta$, the components $\rho$ and $\sigma$ of the solution $x$ implicitly need to satisfy the property
\begin{equation}\label{eq:algcond}
  \rho_j^2(t)+\sigma_j^2(t)=1, \qquad\text{for }j=1,\ldots,n
\end{equation}
for all $t\in \mathbb I$, since $\rho_j=\cos(\theta_j)$ and $\sigma_j=\sin(\theta_j)$ for some functions $\theta_1,\ldots,\theta_n$.
Unfortunately, when applying numerical integrators to \eqref{eq:stdkura1}, implicit relations like \eqref{eq:algcond} are typically not preserved, due to discretization and  roundoff errors, see \cite{KunM07} for a detailed discussion and stabilization techniques to avoid this effect.
One way out of this problem is to add the conditions \eqref{eq:algcond} explicitly to the system, as $n$ new algebraic equations, making the system overdetermined. 
To do so, we introduce a vector of Lagrange multipliers $\mu=[\mu_j]\in\mathbb R^n$. System \eqref{eq:stdkura1} is then equivalent to
\begin{equation}
\begin{split}
  M\dot\omega &= \Omega - D\omega - D_\sigma G\rho + D_\rho G\sigma, \\
  \dot\rho &= -D_\sigma\omega, \\
  \dot\sigma &= D_\rho\omega, \\
  0 &= D_\rho\rho+D_\sigma\sigma-\mu,\label{eq:stdkura2}
\end{split}
\end{equation}
together with initial conditions $x(t_0)=x_0$, satisfying \eqref{eq:algcond} and $\mu(t_0)=e$.
Indeed, by differentiating the 4\textsuperscript{th} equation of \eqref{eq:stdkura2} with respect to $t$, and substituting the 2\textsuperscript{nd} and 3\textsuperscript{rd} equation, we get
\[
  \dot\mu = D_\rho\dot\rho+D_\sigma\dot\sigma = -D_\rho D_\sigma\omega+D_\sigma D_\rho\omega = 0,
\]
so $\tilde x=[\omega^T,\rho^T,\sigma^T,\mu^T]^T$ is a solution of \eqref{eq:stdkura2} if and only if $x=[\omega^T,\rho^T,\sigma^T]^T$ is a solution of $\eqref{eq:stdkura1}$ and $\mu$ is constant.
System \eqref{eq:stdkura2} can again be written as pHDAE $ \tilde E\dot{\tilde x} = (\tilde J-\tilde R)\tilde Q\tilde x + \tilde Bu$, with
\begin{equation*}
  \tilde E =
  \begin{bmatrix}
    E & 0 \\ 0 & 0
  \end{bmatrix}, \
  \tilde J =
  \begin{bmatrix}
    J & 0 \\ 0 & 0
  \end{bmatrix}, \
  \tilde x =
  \begin{bmatrix}
    x \\ \mu
  \end{bmatrix}, \ 
  \tilde Q =
  \left[\begin{array}{ccc|c}
    & & & 0 \\
    & Q & & 0 \\
    & & & 0 \\ \hline
    0 & D_\rho & D_\sigma & -I
  \end{array}\right], \
  \tilde R =
  \begin{bmatrix}
    R & 0 \\ 0 & I
  \end{bmatrix},\ \tilde B =
  \begin{bmatrix}
    B  \\ 0
  \end{bmatrix}.
\end{equation*}
Note that again $\tilde J=-\tilde J^T$, and $\tilde Q^T\tilde E = \tilde E^T \tilde Q$ do not depend on time, and the Hamiltonian has not changed. Furthermore,
\[
  \tilde Q^T\tilde R\tilde Q = 
  \left[\begin{array}{c|c}
    D  & 0 \\ \hline 0 &
    \begin{bmatrix}
      D_\rho \\ D_\sigma \\ -I
    \end{bmatrix}
    \begin{bmatrix}
      D_\rho \\ D_\sigma \\ -I
    \end{bmatrix}^T
  \end{array}\right]\geq 0.
\]

In an analogous way we can formulate the generalized Kuramoto as a pHDAE. We take a slightly more general approach than in  Section~\ref{sec:Modelhiearchy} and start from  a system of $n$ oscillators satisfying the equations
\begin{equation}\label{eq:kuramoto0}
  m_j\ddot\theta_j + d_j\dot\theta_j = \Omega_j + \sum_{k=1}^ng_{jk}\sin(\theta_k-\theta_j), 
\end{equation}
for $j=1,\ldots,n$, where $g_{jk}$  are the entries of $G=G^T\in\mathbb R^{n\times n}$, satisfying  $g_{jk}=g_{kj}\geq0$ for $j\neq k$. These entries represent the strength of the link between
the oscillators $j$ and $k$, and are $0$ when no link is present.
The diagonal entries of $G$ can be chosen freely. We will also denote by $G_0$ the matrix without its diagonal entries, i.e.~$G_0\coloneqq G-\operatorname{diag}(G)$.
Note that if we choose $G=Ke e^T$, then we are in the same situation as in Section~\ref{sec:Modelhiearchy}.

The pHDAE formulation of the generalized Kuramoto system \eqref{eq:kuramoto0} is achieved analogously to that of the standard Kuramoto model, with the Hamiltonian given by \eqref{eq:stdkura:ham},
where the choice of $\operatorname{diag}(G)$ modifies the Hamiltonian by an additive constant, since
\begin{equation*}
  -\frac12\rho^TG\rho-\frac12\sigma^TG\sigma
  = -\frac12\rho^TG_0\rho-\frac12\sigma^TG_0\sigma-\frac12\sum_{j=1}^ng_{jj}(\rho_j^2+\sigma_j^2),
\end{equation*}
and thus
\begin{equation*}
  -\frac12\rho^TG\rho-\frac12\sigma^TG\sigma
  = -\frac12\sum_{j,k}g_{jk}(\rho_j\rho_k+\sigma_j\sigma_k)
  = -\frac12\sum_{j,k}g_{jk}\cos(\theta_k-\theta_j)
\end{equation*}
takes the role of $-\frac12n^2r^2$ in the Hamiltonian.

\subsection{The generalized order parameter}
The modification of the Hamiltonian by an additive constant in the pHDAE formulation of the generalized Kuramoto model suggests to also define a \emph{generalized order parameter} of the form
\[
  \xi \coloneqq \frac1c\sum_{j,k}g_{jk}\cos(\theta_k-\theta_j) + b,
\]
with real constants $b,c>0$, possibly depending on $G$, which take the role of $r^2$ in the standard Kuramoto model. Using the generalized order parameter, the Hamiltonian can be written as
\[
  \mathcal H = \frac12\omega^TM\omega + \frac12c(b-\xi),
\]
where $b$, $c$ need to be chosen to retain consistency with the standard Kuramoto model.

Note that we still have freedom in choosing the diagonal entries of the matrix $G$, that have no influence in the differential equation, and that they change $\xi$ and the Hamiltonian $\mathcal H$ in the same way as $b$.
\begin{lemma}\label{lem:maxval}
Consider the pHDAE formulation of \eqref{eq:kuramoto0} with generalized order parameter
$\xi$. Then the maximal value of $\xi(\theta)$ is given by $c^{-1}\sum_{j,k}g_{jk}+b$, and the maximum is achieved if and only if each connected component of the network, whose topology is defined using $G$ as (weighted) adjacency matrix, is fully synchronized.
\end{lemma}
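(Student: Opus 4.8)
The plan is to reduce the statement to an elementary trigonometric inequality and then use the graph structure of $G$. Since $c>0$, maximizing $\xi(\theta)=\frac1c\sum_{j,k}g_{jk}\cos(\theta_k-\theta_j)+b$ over $\theta\in\mathbb R^n$ (equivalently over the torus $\mathbb T^n$, or over the admissible states $\rho=\cos\theta$, $\sigma=\sin\theta$) is equivalent to maximizing $S(\theta):=\sum_{j,k}g_{jk}\cos(\theta_k-\theta_j)$, and then $\max\xi=c^{-1}\max S+b$. First I would split $S$ into its diagonal and off-diagonal parts: the diagonal part equals $\sum_j g_{jj}\cos(\theta_j-\theta_j)=\sum_j g_{jj}$, a constant independent of $\theta$; for the off-diagonal part the sign hypothesis $g_{jk}=g_{kj}\ge0$ for $j\ne k$ together with $\cos(\theta_k-\theta_j)\le1$ gives the termwise bound $g_{jk}\cos(\theta_k-\theta_j)\le g_{jk}$, hence $\sum_{j\ne k}g_{jk}\cos(\theta_k-\theta_j)\le\sum_{j\ne k}g_{jk}$. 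Adding the diagonal part yields $S(\theta)\le\sum_{j,k}g_{jk}$ for all $\theta$.

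Next I would observe that the bound is attained: choosing $\theta_1=\dots=\theta_n$ makes every $\cos(\theta_k-\theta_j)=1$, so $S=\sum_{j,k}g_{jk}$; since $S$ is continuous and bounded, its maximum is indeed achieved and equals $\sum_{j,k}g_{jk}$, giving $\max\xi=c^{-1}\sum_{j,k}g_{jk}+b$ as claimed.

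For the characterization of the maximizers I would trace back the equality case. Equality $S(\theta)=\sum_{j,k}g_{jk}$ holds if and only if $g_{jk}\cos(\theta_k-\theta_j)=g_{jk}$ for every pair $j\ne k$, i.e.\ if and only if $\cos(\theta_k-\theta_j)=1$ — equivalently $\theta_k\equiv\theta_j\pmod{2\pi}$ — for every edge $\{j,k\}$ of the graph with adjacency matrix $G_0$ (the edges being exactly the pairs with $g_{jk}>0$). It then remains to argue that "equal phases across every edge" is the same as "every connected component is fully synchronized". The nontrivial implication is a connectivity/transitivity argument: if $j$ and $\ell$ lie in the same component, pick a path $j=i_0,i_1,\dots,i_r=\ell$ along edges; then $\theta_{i_0}\equiv\theta_{i_1}\equiv\dots\equiv\theta_{i_r}\pmod{2\pi}$, so $\theta_j\equiv\theta_\ell\pmod{2\pi}$. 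The converse is immediate, since if all phases within a component agree then every edge inside it has $\cos(\theta_k-\theta_j)=1$, while edges between distinct components carry no weight in $S$. This establishes the equivalence asserted in the lemma.

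I do not expect a genuine obstacle here; the only points requiring care are (i) keeping the identification $\cos(\theta_k-\theta_j)=1\iff\theta_k\equiv\theta_j\pmod{2\pi}$ explicit, so that "fully synchronized" is understood modulo $2\pi$, as is natural for phase variables, and (ii) noting at the outset that the freely chosen diagonal entries $g_{jj}$ only shift $S$ (and hence $\xi$ and $\mathcal H$) by the additive constant $\sum_j g_{jj}$, consistently with the remark preceding the lemma, so that they do not affect the location of the maximizers.
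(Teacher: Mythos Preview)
Your proof is correct and follows essentially the same approach as the paper: bound each off-diagonal term via $g_{jk}\cos(\theta_k-\theta_j)\le g_{jk}$ using $g_{jk}\ge0$, then characterize equality as $\theta_j\equiv\theta_k\pmod{2\pi}$ on every edge and propagate along paths to conclude full synchronicity on each connected component. Your version is slightly more explicit (separating the diagonal contribution and verifying attainment), but the argument is the same.
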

\begin{proof}
Since $c>0$, $g_{jk}\geq 0$ for $j\neq k$, and $\cos(\theta_k-\theta_j)\leq1$, it is clear that for all $\theta\in\mathbb R^n$ we have
\[
    \xi(\theta)\leq c^{-1}\sum_{j,k}g_{jk}+b.
\]
Equality holds if and only if $\theta_j\equiv\theta_k$ (mod $2\pi$) for all $j,k$ such that $g_{jk}\neq 0$. This condition is equivalent to the following statement. If there is a path through the network defined by $G$ that connects $j$ and $k$, then $\theta_j\equiv\theta_k$ (mod $2\pi$). In turn, this property is equivalent to the full synchronicity of all connected components.
\end{proof}
If we want to preserve the property that $\max_\theta\xi(\theta)=1$, then we must require that $c(1-b)=\sum_{j,k}g_{jk}$, or equivalently $c(1-b)-\sum_jg_{jj}=e^TG_0 e$.
We can also force the Hamiltonian $\mathcal H$ to be non-negative, with minimum value $0$, by choosing $b=1$, i.e.~$\sum_jg_{jj}=-e^TG_0e$.
\begin{lemma}\label{lem:spec}
Suppose that $b=1$ and $g_{jj}=-\sum_{k\neq j}g_{jk}$ for $j=1,\ldots,n$. Then $\max_\theta\xi(\theta)=1$, $\min_x\mathcal H(x)=0$. Furthermore, $G$ is singular and negative semi-definite.
\end{lemma}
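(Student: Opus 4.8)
The plan is to unpack the hypothesis first and then dispatch the three claims in an order that lets each use the previous one. The assumption $g_{jj} = -\sum_{k\neq j} g_{jk}$ says exactly that every row of $G$ sums to zero, i.e.\ $Ge = 0$ where $e$ is the vector of all ones; hence $G$ is singular. Summing this relation over all rows gives $\sum_{j,k} g_{jk} = e^T G e = 0$, so by Lemma~\ref{lem:maxval} the maximal value of $\xi$ is $c^{-1}\sum_{j,k}g_{jk} + b = 0 + 1 = 1$, which is the first claim.

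For negative semi-definiteness I would invoke the standard graph-Laplacian identity: for any $x\in\mathbb R^n$, using $g_{jj} = -\sum_{k\neq j}g_{jk}$ together with the symmetry $g_{jk}=g_{kj}$,
\[
  x^T G x = \sum_j g_{jj}x_j^2 + \sum_{j\neq k} g_{jk} x_j x_k = -\tfrac12\sum_{j\neq k} g_{jk}(x_j - x_k)^2 \le 0,
\]
because $g_{jk}\ge 0$ for $j\neq k$. Thus $-G$ is the weighted Laplacian of the network, which is positive semi-definite, so $G = G^T \le 0$ and, again, $G$ is singular with $e$ in its kernel.

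It remains to show $\min_x\mathcal H(x) = 0$. Using $\mathcal H(x) = \tfrac12\omega^T M\omega - \tfrac12\rho^T G\rho - \tfrac12\sigma^T G\sigma$ and the facts $M>0$ and $-G\ge 0$ just established, all three summands are non-negative, so $\mathcal H(x)\ge 0$ for every $x$; equivalently, from $\mathcal H = \tfrac12\omega^T M\omega + \tfrac12 c(1-\xi)$ one gets $\mathcal H\ge 0$ since $c>0$ and $\xi\le 1$. The value $0$ is attained: take $\omega = 0$ and a fully synchronized configuration $\theta_1 = \cdots = \theta_n$, so that $\rho$ and $\sigma$ are scalar multiples of $e$ and hence $G\rho = G\sigma = 0$ (such a maximizer of $\xi$ exists by Lemma~\ref{lem:maxval}). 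All the computations here are routine; the only steps that need a moment of care are the sum-of-squares rewriting of $x^T G x$ — i.e.\ recognizing $-G$ as a graph Laplacian — and checking that the infimum of $\mathcal H$ is genuinely attained by an admissible (synchronized, at rest) state rather than merely approached, so that it is a true minimum equal to $0$.
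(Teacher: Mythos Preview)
Your proof is correct and structurally matches the paper's: both note $Ge=0$ for singularity, invoke Lemma~\ref{lem:maxval} (implicitly in the paper) for $\max_\theta\xi=1$, and deduce $\min_x\mathcal H=0$ from the choice of $b$ together with $G\le0$ and $M>0$. The one substantive difference is the argument for $G\le0$: the paper simply appeals to the Gersgorin circle theorem (each disc is centered at $g_{jj}\le0$ with radius $-g_{jj}$, hence lies in the closed left half-plane), whereas you rewrite $x^TGx$ as the negative sum-of-squares $-\tfrac12\sum_{j\neq k}g_{jk}(x_j-x_k)^2$, i.e.\ you recognize $-G$ as the weighted graph Laplacian. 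Your route is slightly more informative (it exhibits the kernel and the quadratic form explicitly), while the paper's is a one-line citation; both are standard and equally valid here.
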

\begin{proof}
The particular choice of $b$ and $g_{jj}$ gives the first two assertions.
  We also have $Ge=0$, so $G$ is singular and finally, $G \leq 0$ is an immediate consequence of the Gersgorin circle theorem~\cite{GolV96}.
\end{proof}

To illustrate the previous analysis we present some examples.
\begin{example}\label{ex:ex1}\textup{
  Consider the standard Kuramoto model with $G_0=K(e e^T-I)$. Choose $b=1$ and $[g_{jj}]=-G_0e=-K(n-1)e$, i.e.~$G=K(e e^T-nI)$. To obtain $c>0$ such that $\xi=r^2$, we note that
\[
    \xi = \frac1c\left(K\sum_{j,k}\cos(\theta_k-\theta_j)-Kn^2\right)+1  \\
    = \frac1c(Kn^2r^2-Kn^2)+1 = \frac{Kn^2}c(r^2-1)+1,
\]
so  $c=Kn^2$ gives $\xi=r^2$ and $\mathcal H = \frac12\omega^TM\omega + \frac12Kn^2(1-r^2)$.
  }
\end{example}
In the following, we always assume that $b=1$ and $g_{jj}=-\sum_{k\neq j}g_{jk}$ for $j=1,\ldots,n$. It would be interesting to choose $c(G_0)$ in such a way that $\min_\theta\xi(\theta)=0$ for each choice of $G_0$, but it is not clear how this can be achieved, since
\[
  \min\left({\sum_{j,k}g_{jk}\cos(\theta_k-\theta_j)}\right)
\]
depends strongly on the topology and on the weights of the network.

\begin{example}\textup{
Consider a network with  $3$ oscillators
  \[
    G_0 =
    \begin{bmatrix}
      0 & 1 & 1 \\
      1 & 0 & 0 \\
      1 & 0 & 0
    \end{bmatrix},
    \quad 
    G =
    \begin{bmatrix}
      -2 & 1 & 1 \\
      1 & -1 & 0 \\
      1 & 0 & -1
    \end{bmatrix}.
  \]
While in the standard Kuramoto model the minimum of $\xi$ was achieved for $e^{i\theta_j}$ evenly distributed on the unit circle for $j=1,2,3$, now it is achieved for $e^{i\theta_2}=e^{i\theta_3}=-e^{i\theta_1}$ instead. Indeed, this condition gives that for all $j,k$ such that $j\neq k$ and $g_{jk}\neq0$, we have $\cos(\theta_k-\theta_j)=-1$, a configuration that was not possible in the fully coupled case.
In this case we have
\[
    \sum_{j,k}g_{jk}\cos(\theta_k-\theta_j) = -\sum_{j,k}|g_{jk}| = -8,
\]
while choosing the $e^{i\theta_j}$ evenly distributed would give only
\[
    \sum_{j,k}g_{jk}\cos(\theta_k-\theta_j) = -6.
\]
A similar effect would also happen if we choose
  \[
    G_0 =
    \begin{bmatrix}
      0 & 1 & 1 \\
      1 & 0 & \varepsilon \\
      1 & \varepsilon & 0
    \end{bmatrix},
  \]
  with $\varepsilon>0$ very small.
  }
\end{example}

Since we have
\[
  \sum_{j,k}g_{jk}\cos(\theta_k-\theta_j) \geq -\sum_{j,k}|g_{jk}| = -2e^TG_0e,
\]
and equality can happen for some $G_0$ and $\theta$, one may be tempted to choose $c=2 e^TG_0e$, to ensure $\xi\geq0$.
Unfortunately, that would not be consistent with the standard Kuramoto model, since in that case $e^TG_0e=Kn(n-1)$ but $c=Kn^2$ (see Example~\ref{ex:ex1}).
That suggests to define
\[
  c(G_0) = \frac{n}{n-1}e^TG_0e = \left({1+\frac{1}{n-1}}\right)e^TG_0e,
\]
to guarantee the consistency with the standard Kuramoto model. With this choice, we obtain
\begin{equation*}
  \xi(\theta)
  = 1 - \frac{n-1}{n}\frac{\sum_{j,k}g_{jk}\cos(\theta_k-\theta_j)}{e^TG_0e}
  \geq -1 + \frac2n,
\end{equation*}
so we can guarantee that $\xi\in[-1+\frac2n,1]$, in particular $|\xi|\leq1$.

\subsection{Relative phase angle formulation}
To analyze the dynamics of the system, it is sufficient to consider the relative phase angles $\theta_j-\theta_k$ for $j,k=1,\ldots,n$, instead of the absolute phase angles $\theta_j$.  This has the advantage that it removes some redundancy from the system.
For example, states $(\theta,\omega)$ and $(\theta+\gamma e,\omega)$ with $\theta,\omega\in\mathbb R^n$ and $\gamma\in\mathbb R$ behave in the same way in the absolute system, but are actually the same in the relative system.
To simplify the presentation, for the remainder of this subsection, we suppose to have $n+1$ oscillators $\theta_0,\ldots,\theta_n$ instead of $n$. Then considering all phase angles relative to $\theta_0$, setting
\[
  \hat\theta_j \coloneqq \theta_j-\theta_0, \quad
  \hat\rho_j \coloneqq \cos(\hat\theta_j), \quad
  \hat\sigma_j \coloneqq \sin(\hat\theta_j),\qquad j=1,\ldots,n,
\]
 and $\hat\theta_0=0$, $\hat\rho_0=1$ and $\hat\sigma_0=0$, we obtain that
\begin{equation}
\begin{alignedat}{2}
  \cos(\theta_k-\theta_j)
  &= \cos(\hat\theta_k-\hat\theta_j)
  &&= \hat\rho_j\hat\rho_k+\hat\sigma_j\hat\sigma_k, \\
  \sin(\theta_k-\theta_j)
  &= \sin(\hat\theta_k-\hat\theta_j)
  &&= \hat\rho_j\hat\sigma_k-\hat\sigma_j\hat\rho_k
  \label{eq:relkura:prop0}
\end{alignedat}
\end{equation}
are satisfied for $j,k=0,\ldots,n$.
Introduce the partitioning
\[
  G =
  \begin{bmatrix}
    g_{00} & g_0^T \\
    g_0 & \hat G
  \end{bmatrix},\quad
  g_{00}\in\mathbb R,\quad
  g_0\in\mathbb R^n,\quad
  \hat G\in\mathbb R^{n\times n},
\]
and choose the diagonal as in the previous subsection, so that $Ge=0$, and then $g_0=-\hat G e$ and $g_{00}=-e^Tg_0=e^T\hat Ge$.
Let us partition $M=\operatorname{diag}(m_0,\hat M)$, $\omega=[\omega_0,\hat\omega^T]^T$, $D=\operatorname{diag}(d_0,\hat D)$, $\Omega=[\Omega_0,\hat\Omega^T]^T$ and define $\hat\rho=[\hat\rho_j]_{j=1,\ldots,n}\in\mathbb R^n$, $\hat\sigma=[\hat\sigma_j]_{j=1,\ldots,n}\in\mathbb R$, $D_{\hat\rho}=\operatorname{diag}(\hat\rho)$ and $D_{\hat\sigma}=\operatorname{diag}(\hat\sigma)$. Then the system can be written equivalently as
\begin{align*}
  m_0\dot\omega_0 &= -d_0\omega_0 - e^T\hat G\hat\sigma + \Omega_0, \\
  \hat M\dot{\hat \omega} &= -\hat D\hat \omega
    + D_{\hat\rho}\hat G\hat\sigma
    + D_{\hat\sigma}\hat G(e-\hat\rho) + \hat \Omega, \\
  \dot{\hat\rho} &= -D_{\hat\sigma}{\hat \omega}+\hat\sigma\omega_0, \\
  \dot{\hat\sigma} &= D_{\hat\rho}{\hat \omega}-\hat\rho\omega_0.
\end{align*}
Introducing the new variable $\tilde\rho_j=e-\hat\rho_j$ instead of $\hat\rho_j$, for $j=1,\ldots,n$, and correspondingly $\tilde\rho\in\mathbb R^n$ and $D_{\tilde\rho}\in\mathbb R^{n\times n}$, we obtain the new system
\begin{align*}
  m_0\dot\omega_0 &= -d_0\omega_0 - \hat\sigma^T\hat G\tilde\rho - \hat\rho^T\hat G\hat\sigma + \Omega_0, \\
  \hat M\dot{\hat \omega} &= -\hat D{\hat \omega}
    + D_{\hat\rho}\hat G\hat\sigma
    + D_{\hat\sigma}\hat G\tilde\rho + \hat\Omega, \\
  \dot{\tilde\rho} &= D_{\hat\sigma}{\hat \omega}-\hat\sigma\omega_0, \\
  \dot{\hat\sigma} &= D_{\hat\rho}{\hat \omega}-\hat\rho\omega_0,
\end{align*}
that is again  a pHDAE system of the form $
  \hat E\dot{\hat x} = (\hat J-\hat R)\hat Q\hat x + \hat Bu$,
with
$\hat x=[\omega_0,\omega^T,\tilde\rho^T,\hat\sigma^T]^T$, $u=\Omega$, $\hat E=\operatorname{diag}(m_0,\hat M,I,I)$, $\hat Q=\operatorname{diag}(1,I,-\hat G,-\hat G)$, $\hat R=\operatorname{diag}(d_0,\hat D,0,0)$ and
\[
  \hat J =
  \begin{bmatrix}
    0 & 0 & \hat\sigma^T & \hat\rho^T \\
    0 & 0 & -D_{\hat\sigma} & -D_{\hat\rho} \\
    -\hat\sigma & D_{\hat\sigma} & 0 & 0 \\
    -\hat\rho & D_{\hat\rho} & 0 & 0
  \end{bmatrix}, \qquad
  \hat B =
  \begin{bmatrix}
    1 & 0 \\
    0 & I \\
    0 & 0 \\
    0 & 0
  \end{bmatrix},
\]
with Hamiltonian
\begin{equation*}
  \mathcal H(\hat x) = \frac12\hat x^TQ^TE\hat x 
  = \frac12\omega^TM\omega - \frac12\tilde\rho^T\hat G\tilde\rho - \frac12\hat\sigma^T\hat G\hat\sigma.
\end{equation*}
Note that $\hat J=-\hat J^T$, $\hat R=\hat R^T\geq 0$ and $\hat Q^T\hat E$ is symmetric and constant in time.
Furthermore, we still have $\hat G\leq0$ from the Gersgorin circle theorem, so $\hat Q^T\hat E\geq 0$.
It can also be shown that if the network is connected, then $\hat G<0$ and $\hat Q^TE>0$. This is a consequence of a stronger version of the Gersgorin theorem \cite[Theorem 1.12]{Var11}, since connection would imply $g_0\not\equiv0$.
By applying property \eqref{eq:relkura:prop0}, we get
\begin{align*}
  \sum_{j,k=0}^ng_{jk}\cos(\theta_k-\theta_j)
  &= [1\;\hat\rho^T]
    \begin{bmatrix}
      g_{00} & g_0^T \\ g_0 & \hat G
    \end{bmatrix}
    \begin{bmatrix}
      1 \\ \hat\rho
    \end{bmatrix}
    +
    [0\;\hat\sigma^T]
    \begin{bmatrix}
      g_{00} & g_0^T \\ g_0 & \hat G
    \end{bmatrix}
    \begin{bmatrix}
      0 \\ \hat\sigma
    \end{bmatrix}  \\
  &= g_{00}+2g_0^T\hat\rho+\hat\rho^T\hat G\hat\rho + \hat\sigma^T\hat G\hat\sigma  \\
  &= e^T\hat Ge - 2e^T\hat G\hat\rho + \hat\rho^T\hat G\hat\rho + \hat\sigma^T\hat G\hat\sigma  \\
  &= (e-\hat\rho)^T\hat G(e-\hat\rho) + \hat\sigma^T\hat G\hat\sigma  \\
  &= \tilde\rho^T\hat G\tilde\rho + \hat\sigma^T\hat G\hat\sigma,
\end{align*}
so the Hamiltonian has not changed.

Finally, as in the case of the standard Kuramoto model, we incorporate  the algebraic equations
\[
  \hat\rho_j^2 + \hat\sigma_j^2 = 1, \ j=1,\ldots,n
\]
or equivalently
\[
  -(1+\hat\rho_j)\tilde\rho_j + \hat\sigma_j^2 = 0, \ j=1,\ldots,n.
\]
We introduce again Lagrange multipliers $\mu\in\mathbb R^n$ into the system and obtain
%
\[
  -(I+D_{\hat\rho})\tilde\rho + D_{\hat\sigma}\sigma + \mu = 0.
\]
Proceeding as in the reformulation of the standard Kuramoto model, we obtain a modified pHDAE formulation with the same Hamiltonian.
%
%

\section{Numerical results}

In this section we present some numerical simulation results that illustrate the advantages of our new, extended pHDAE formulation.
In particular, we will concentrate on the Italian high-voltage ($380$ kV) power grid (Sardinia excluded),
which is composed of $n=127$ nodes, divided in $34$ sources (hydroelectric and thermal power plants) and $93$ consumers,
connected by $342$ links \cite{ForFSF12}.
This network is characterized by a quite low average connectivity $n_c=2.865$, due to the geographical distributions of the nodes along Italy.
The map of the Italian high-voltage power grid can be seen at the website of the Global Energy Network Institute\footnote{\url{http://www.geni.org}}.
The data that we used have been extracted from the map delivered by the Union for the Coordination of Transport of Electricity (UCTE)\footnote{\url{https://www.entsoe.eu/map/Pages/no-webgl.html}}.
We represent the power network through the generalized Kuramoto model.
For the sake of simplicity, each node is assumed to have the same angular mass and the same damping constant ($m_j=m$ and $d_j=d$ for $j=1,\ldots,n$). Furthermore, each existing transmission line is assumed to have the same coupling coefficient (that is, we have either $g_{jk}=K$ or $g_{jk}=0$ for all $j,k=1,\ldots,n$ with $j\neq k$).
We distinguish the generators from the consumers by the sign of the power (i.e.~the natural frequency) associated to each node: $\Omega_j>0$ for generators and $\Omega_j<0$ for consumers.
When not specified differently, we will choose as standard parameters in our simulations $m=6$ and $d=1$.

\subsection{Perfectly balanced bimodal distribution}
Up to a change of variables, we can consider the generalized Kuramoto model to have reference frequency $0$.
Then, to have a stable fully locked state as a possible solution of \eqref{eq:genkura} (that is, a solution trajectory such that $\omega(t)\equiv0$), it is necessary that $\sum_{j=1}^n\Omega_j=0$.
We assume all generators to have the same power $\Omega_j=+G$, and all consumers to have the same power $\Omega_j=-C$.
In our simulations we have set $C = 1.0$, $G = 2.7353$, in order to achieve $\sum_{j=1}^n\Omega_j=0$.
This setup corresponds to a Kuramoto model with inertia with perfectly balanced bimodal distribution of the natural frequencies.

In order to validate the proposed model, we integrate the set of equations \eqref{eq:stdkura2} with different integration schemes, and compare the results obtained.
To guarantee consistent energy behavior and conservation of the Lagrange multiplier $\mu$, a geometric iterator should be applied. Our methods of choice are the symplectic Euler method and the implicit midpoint rule; for the latter, its effectiveness and computational time depend on the accuracy requested to the nonlinear solver at each integration step.
We will also employ the explicit midpoint method, which is not a geometric integrator, but it allows for faster simulations than the implicit midpoint, with the same order.
Finally, we use as a reference the integration of original equation for the Kuramoto oscillators with inertia \eqref{eq:stdkura0}, with a $4$\textsuperscript{th} order Runge-Kutta scheme.
For all simulations, when not specified differently, we choose integration step $0.002$, transient time $100$ and simulation time $5000$.

\begin{figure}[htb]
  \centering
  \includegraphics[width=0.5\textwidth]{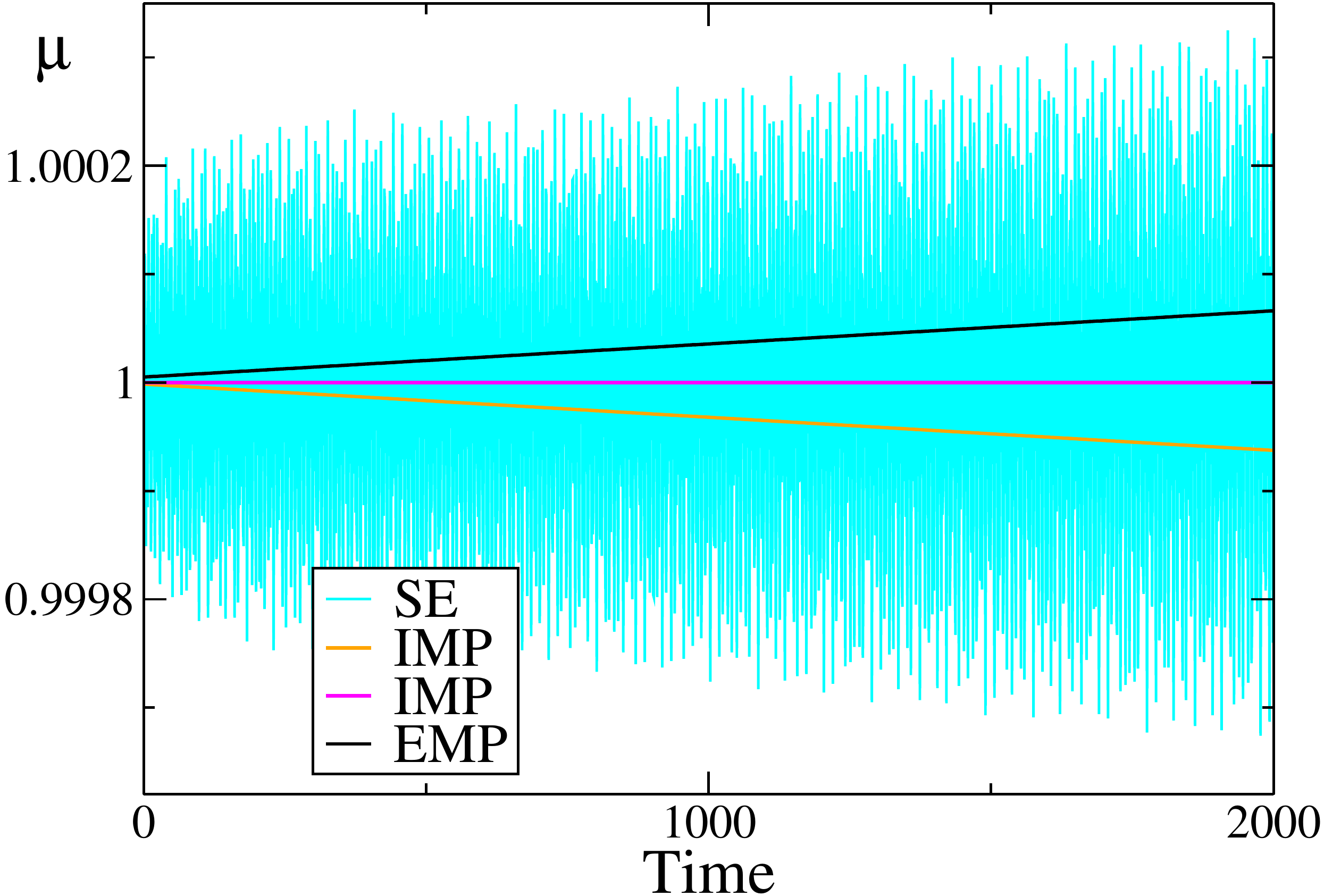}
  \caption{Conservation of Lagrange multiplier $\mu$ for different integration schemes: symplectic Euler (SE), implicit midpoint (IMP), explicit midpoint (EMP).
  The orange and magenta curves differ for the convergence threshold in the implicit solver ($10^{-6}$ and $10^{-12}$, respectively).
  It is possible to avoid energy drift only when applying IMP (with sufficiently small threshold value) and SE. For all the simulations $K=0.5$.}
  \label{fig.0}
\end{figure}

\begin{figure}[htb]
  \centering
  \includegraphics[width=0.5\textwidth]{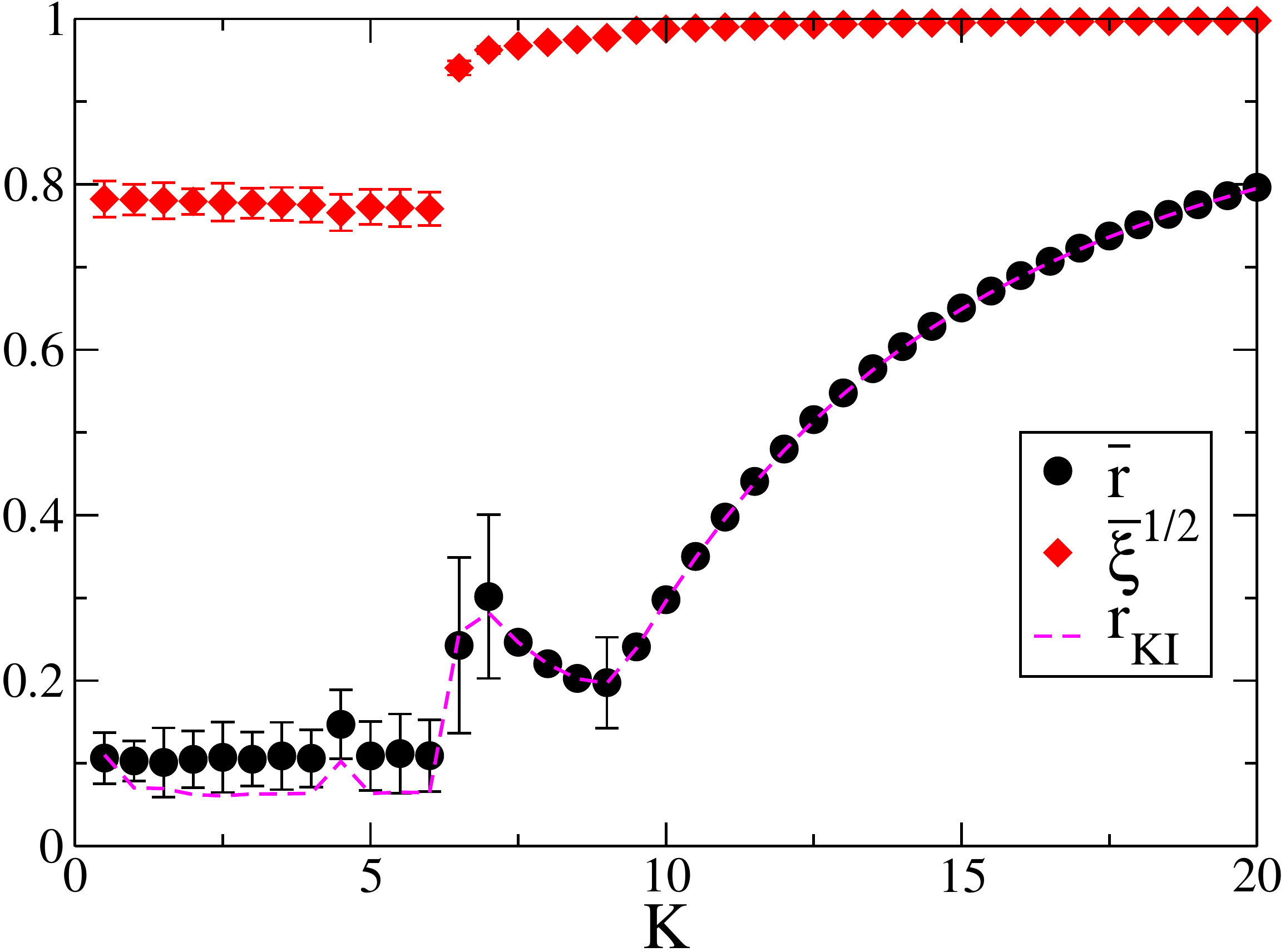}
  \caption{Average order parameters vs the coupling constant $K$, with perfectly balanced bimodal distribution of the $\Omega_j$.
  $r_{\text{KI}}$ is the average standard Kuramoto parameter, calculated through a $4$\textsuperscript{th} order Runge-Kutta scheme with the original formulation.
  $\bar{r}$ and ${\bar{\xi}}^{1/2}$ are the average standard Kuramoto parameter and generalized order parameter, respectively, obtained with the explicit midpoint rule applied to the pHDAE formulation.
  } \label{fig.2}
\end{figure}


In order to understand the transition from the non-synchronized state at low coupling constant $K$ to the synchronized state at high $K$ values, we performed sequences of simulations by adiabatically increasing the parameter $K$ with (the same) random initial conditions for $\{\theta_i \}$ and $\{\dot{\theta}_i\}$, for both \eqref{eq:stdkura0} and \eqref{eq:stdkura2}.
For each value of $K$ but the first one, the simulation is initialized by employing the last configuration of the previous simulation in the sequence.
We calculated the average order parameters $\bar{r}$, ${\bar{\xi}}^{1/2}$ which both show a non-monotonic behavior in $K$ (see Figure~\ref{fig.2}).
In particular, for small coupling values $\bar{r}\propto 1/\sqrt{N}$, we observe an abrupt jump for $K=6.5$. 
Subsequently, $\bar{r}$ decreases, reaching a minimum at $K=9$.
Then, for larger $K$, the order parameter $\bar{r}$ increases monotonically towards the fully synchronized regime.
There are no substantial differences between the Kuramoto order parameter $r$ calculated within the original and the pHDAE formulation.
The average order parameter ${\bar{\xi}}^{1/2}$, on the other hand, does not show such an
irregular behavior for small coupling values, but has an almost constant value until $K=6.5$, where the transition to synchronization takes place and, from that value on, it rapidly increases towards $1$.
Since $\xi$ takes into account the topology of the network, by introducing the connectivity matrix into the definition and summing over the connecting edges, the investigation of the synchronization level is more straightforward: its behavior is more stable for small coupling constants, where it describes better than $r$ the real level of synchronization present in the network, while it is nonetheless able to identify the transition to synchronization at $K=6.5$, as the classical Kuramoto order parameter.

The correctness of the critical coupling constant value at which the transition to synchronization takes place can be confirmed indirectly by the calculation of the maximal Lyapunov exponent $\lambda_M$, which
represents a measure of the stability of the system and is a good indicator for the emergence of problems in the network. In particular, $\lambda_M$ shows bigger fluctuations for values of $K$ just below the transition to synchronization, and it becomes zero for $K\geq 6.5$ (see Figure~\ref{fig.3}).

\begin{figure}[ht!]
  \centering
  \includegraphics*[angle=0,width=0.5\textwidth]{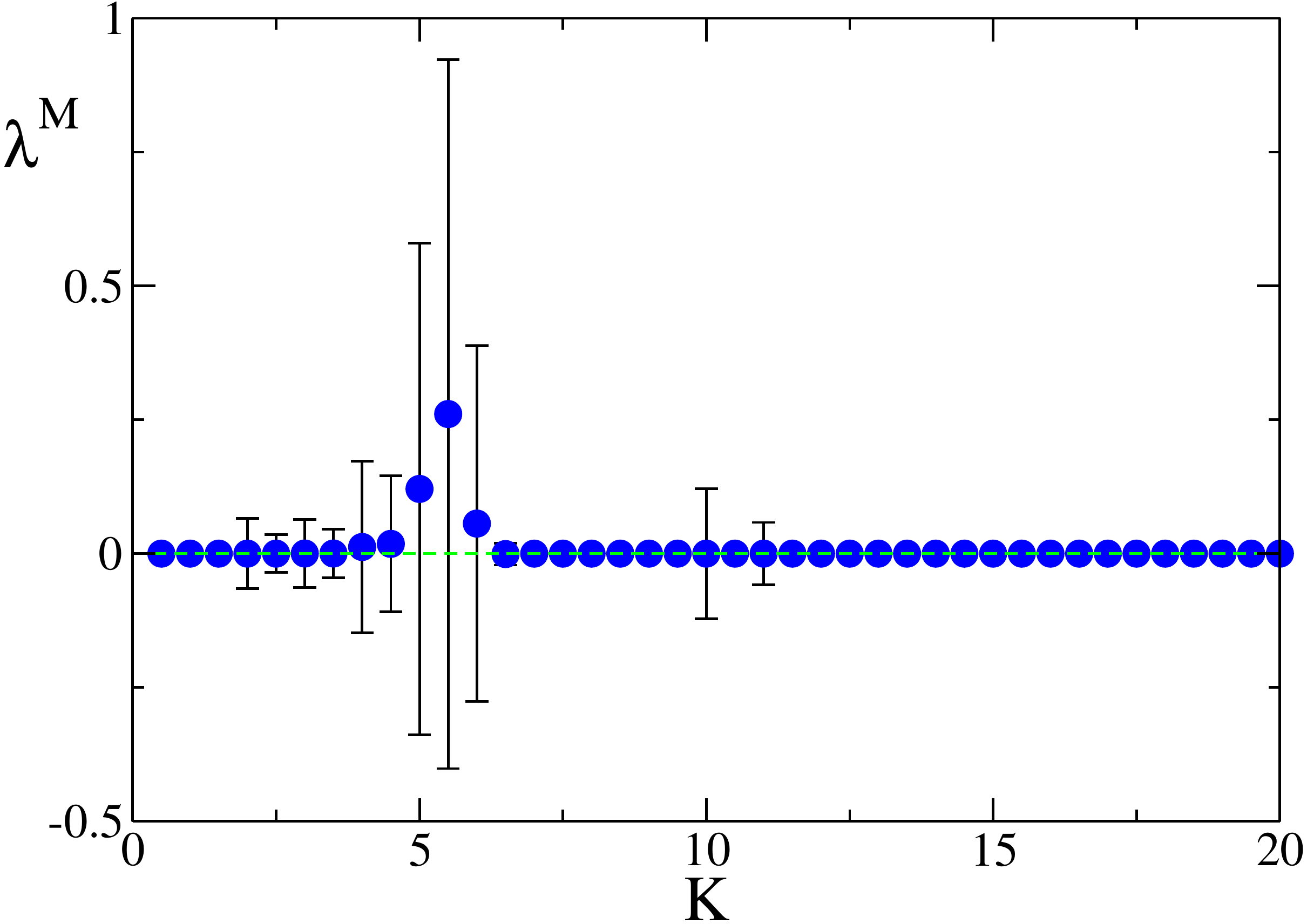}
  \caption{Maximal Lyapunov exponent vs coupling constant K, with perfectly balanced bimodal distribution of the $\Omega_j$. The dashed green line represents the value $\lambda^M=0$.
  } \label{fig.3}
\end{figure}

The reason why $\bar{r}$ fails in identifying the correct level of synchronization, can be understood by examining the average phase velocity of the oscillators $\langle\dot{\theta}_i\rangle$
(see Figure~\ref{fig.4}). For coupling constant $K<6.5$, the system is split in two clusters: one composed by the sources, which oscillates close to their proper frequency $G$, and another one containing the consumers, which rotates with negative average velocity.
Therefore, there is a non-trivial form of partial synchronization already present in the network that
is not reflected by $\bar{r}$, whose value of the order of $1/\sqrt{N}$ indicates that the system behaves asynchronously. 
For $K=6.5$, the coupling is sufficient to induce frequency adaptation and to enhance synchronization: the two clusters start merging to a single cluster, although a large part of the oscillators is still not synchronized. For $K=7$, the oscillators get more entrained, and most of them are locked with almost zero average velocity; however, a large part ($53$ out of $127$) form
a secondary cluster of whirling oscillators, with velocity $\langle\dot{\theta}\rangle\approx-0.122$. This secondary cluster has a geographical interpretation, since it includes the power stations and consumers located
in central and southern Italy, including Sicily, as already seen in \cite{OlmNBT14}.
By increasing the coupling to $K=7.5$, the two clusters merge in a unique cluster with few scattered oscillators and finally, for coupling $K\geq 8$, all the oscillators are locked in a unique cluster, which is reflected in a monotonic increase of the average order parameters.

\begin{figure}[ht]
  \centering
  \includegraphics[width=0.6\textwidth]{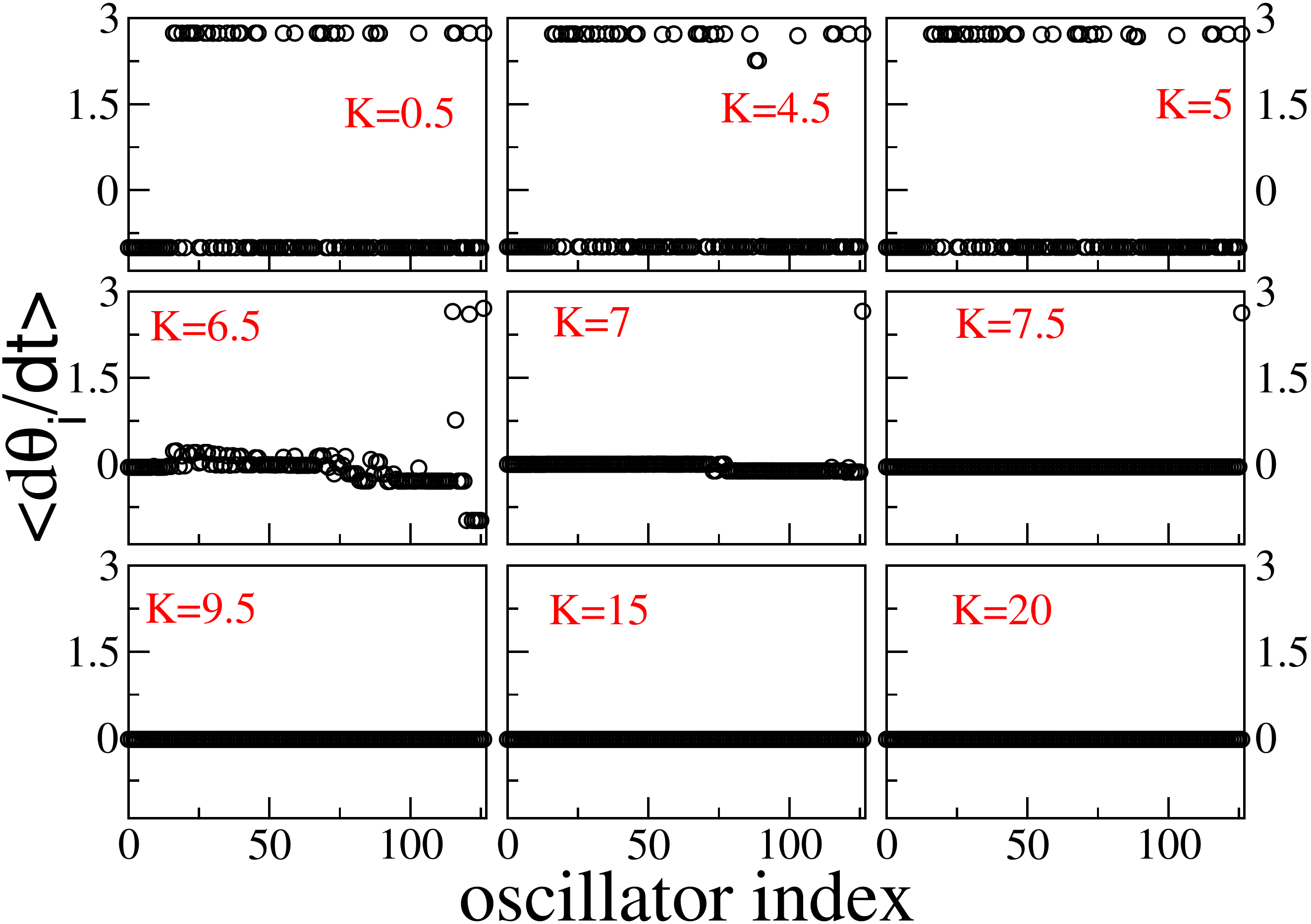}
  \caption{ Average phase velocity of each oscillator for different values of the coupling $K$, with perfectly balanced bimodal distribution of the $\Omega_j$. The data have been obtained by changing $K$ adiabatically from $K=0$, with $\Delta K=0.5$.
  } \label{fig.4}
\end{figure}

The investigation of the transition to synchronization has been done, up to this point, for fixed values of the mass $m$ and of the damping constant $d$.
In order to design control schemes capable of guiding the network towards synchronization, we investigate the response of the system for different masses and damping constants.
For bigger masses the critical coupling value at which the transition to synchronization takes place
becomes bigger, thus meaning that bigger coupling strength is necessary in order to achieve the synchronized state for power grid networks in case of bigger masses (see Figure~\ref{fig.5}).
\begin{figure}[ht!]
  \centering
  \includegraphics*[angle=0,width=0.6\textwidth]{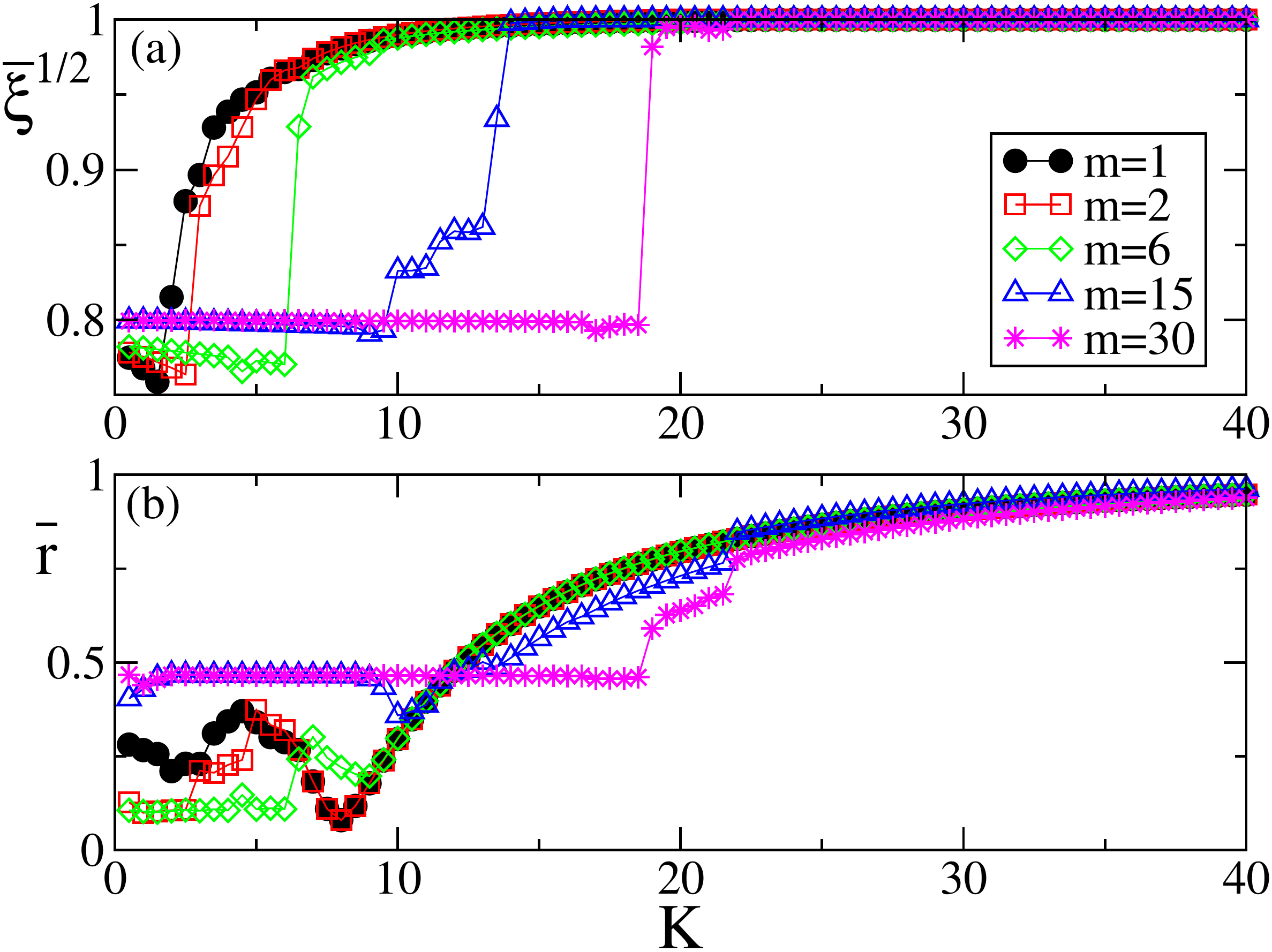}
  \caption{Average order parameters vs the coupling constant $K$, for different masses.
  The data have been obtained by changing $K$ adiabatically the coupling constant from $K=0$, with $\Delta K=0.5$. Increasing the mass shifts the critical value at which the system reaches synchronization.
  } \label{fig.5}
\end{figure} 
On the other hand, bigger damping constants help reaching synchronization, as shown in Figure~\ref{fig.6}:
for fixed coupling values, we can increase the synchronization level by increasing
the damping $d$. However, a consistent increase is possible only for coupling constants slightly smaller or bigger than the critical value $K=6.5$ at which we have observed the transition
in Figure~\ref{fig.2}, thus illustrating that the coupling strength plays a more important role in the transition, with respect to the damping constant.
This observation is also justified by the equivalent investigation of the dynamics of the system, obtained by varying the mass $m$, while keeping the coupling strength constant.
In particular, the order parameter ${\bar{\xi}}^{1/2}$ remains constant for fixed $K$, and does not show any dependence on the mass (results not shown).

\begin{figure}[ht]
  \centering
  \includegraphics[width=0.5\textwidth]{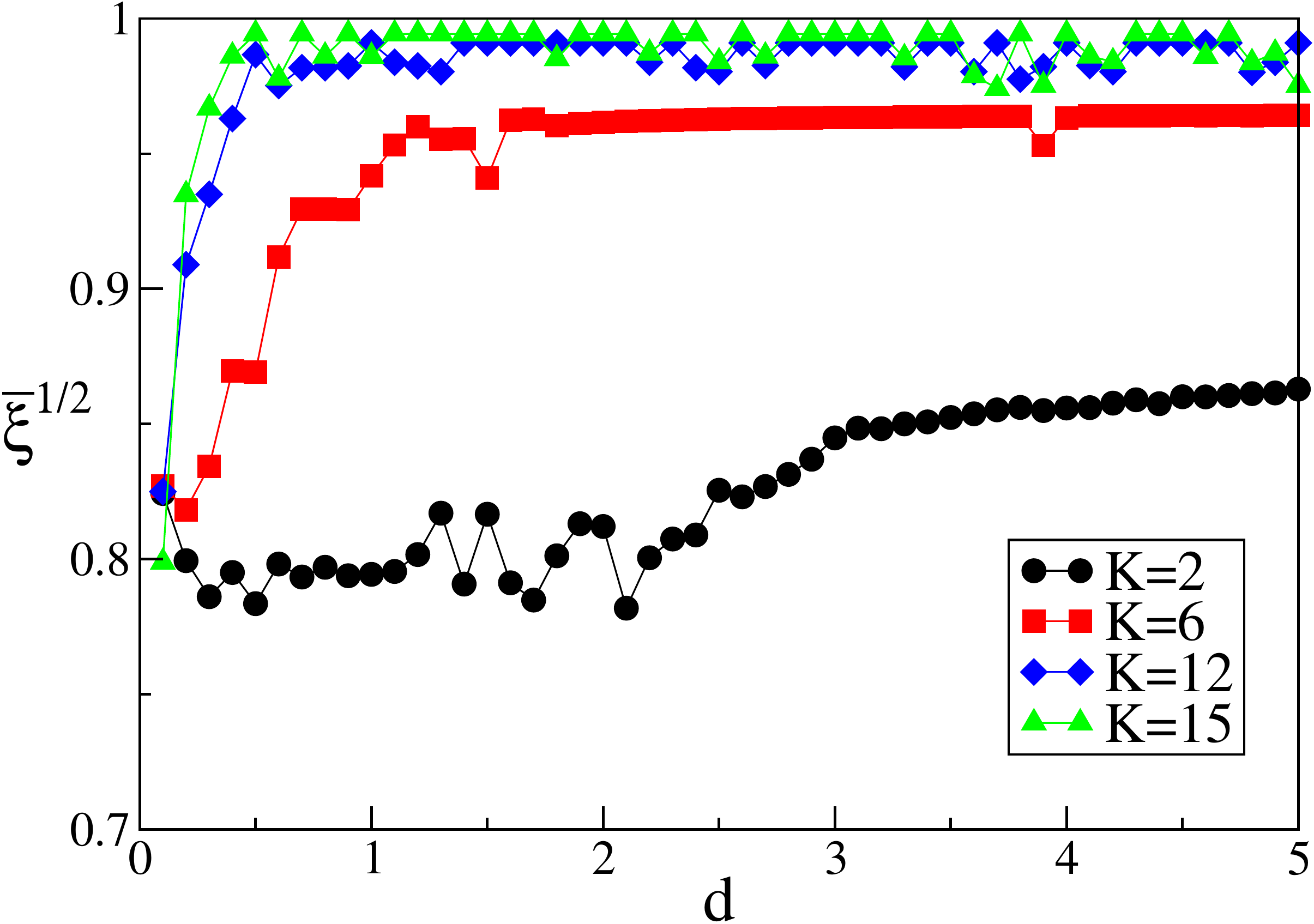}
  \caption{Average order parameter ${\bar{\xi}}^{1/2}$ vs the damping constant $d$ for different coupling constants $K$.
  The data have been obtained by changing adiabatically the coupling constant, starting from $K=0$ and with $\Delta K=0.5$.
  } \label{fig.6}
\end{figure}

When a perturbation is introduced into a power grid and a generator is diverging from the synchronized regime at which it is supposed to work, it is important to react to this perturbation as fast as possible in order to avoid shutting down the generator. The pHDAE formulation of the Kuramoto model with inertia is extremely useful for this kind of problem, thanks to the possibility of fast computation associated to a hierarchical model of the power grid: the differential equations \eqref{eq:stdkura2} guarantee an on-line fast description of the real system, irrespectively of the fact that it is a simplified model. In particular, the calculation of the maximal Lyapunov exponent can be used to identify instabilities emerging in the network. When it is necessary to operate at the level of the real grid, due to emerging disturbances that may affect the stability, the hierarchical model then allows to control and adjust the parameters in more sophisticated models, like the instantaneous power model
shown in Section~\ref{sec:Modelhiearchy}, and finally in the real grid, by advancing upward along the hierarchy.
Starting from a situation where the system is slightly out of synchronization (that is modelled by
choosing $K=6$), we can thus design a control method once the response of the system to the parameters change is known.
To illustrate this, we have calculated the time $t_s$ that the system needs to reach
the synchronization from this out-of-synchronicity initial condition: if the system is not operating perfectly, we can restore the desired status as fast as possible by abruptly increasing the coupling
constant or the damping constant or by decreasing the mass (see Figure~\ref{fig.7}).

\begin{figure}[ht!]
  \centering
  \includegraphics*[angle=0,width=0.6\textwidth]{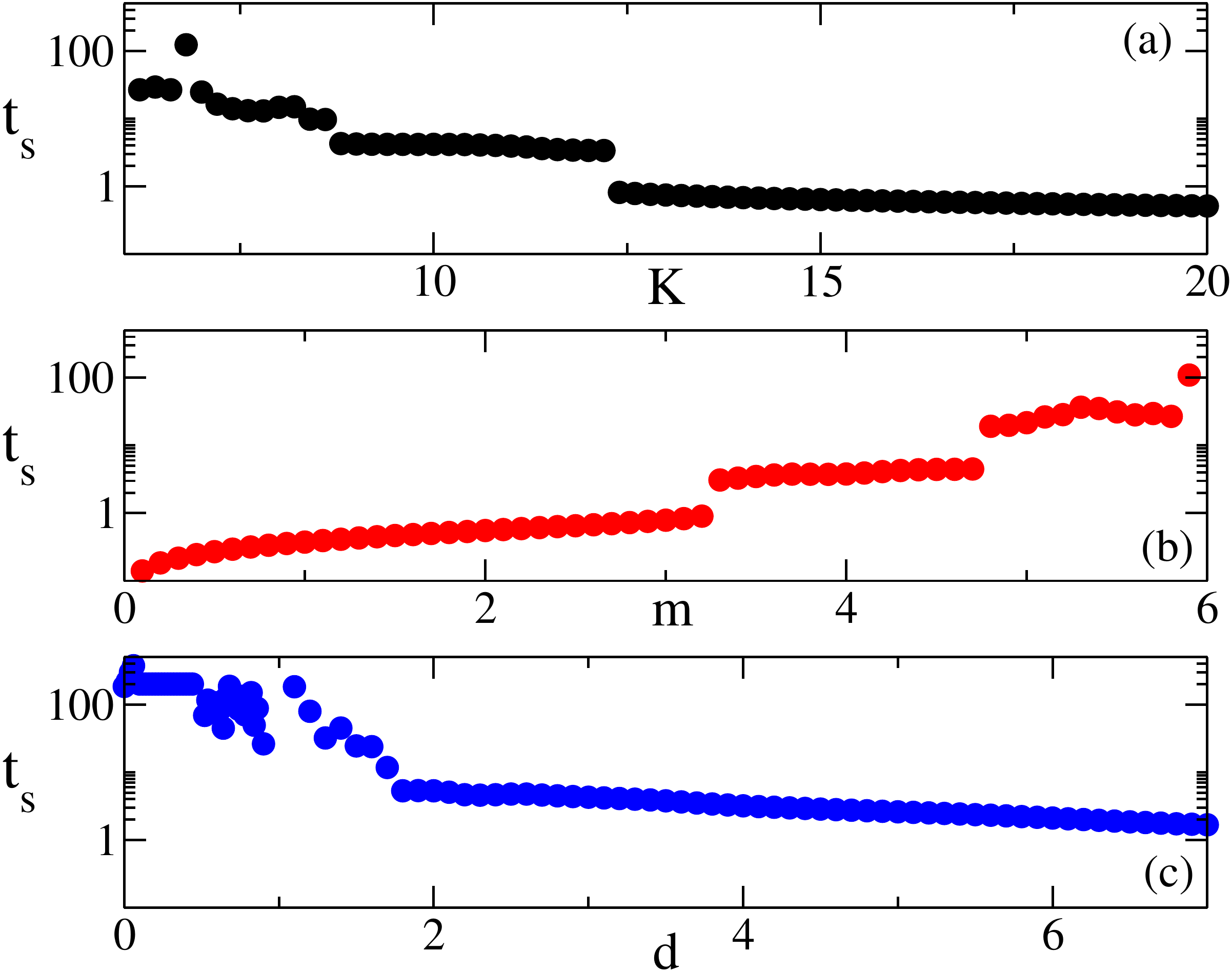}
  \caption{Time $t_s$ necessary to reach a good synchronization state ($\xi^{1/2}>0.95$), as a function of the parameters. The values $K$, $m$, $d$ reported on the $x$-axes of the panels represent the values to which the initial parameters are tuned, in order to calculate the time needed to reach synchronization.
  } \label{fig.7}
\end{figure}

\subsection{Gaussian bimodal distribution}

Since in a real power grid we do not expect the power of the generators (or consumers) to be characterized by the same exact values, we analyzed also the case where the $\Omega_j$ are chosen as i.i.d.~random variables, whose distribution is given as the combination of two almost non-overlapping Gaussian distributions
\[
  g(s) = \frac{1}{2\sqrt{2\pi}} \left[e^{\frac{-(s-\Omega^-)^2}{2}} + e^{\frac{-(s-\Omega^+)^2}{2}}\right],
\]
centered at values $\Omega^-,\Omega^+\in\mathbb R$, in our case $\Omega^\pm=\pm2$.
The other parameters are kept as before.
The analysis that we have done is equivalent to what is presented in Figures~\ref{fig.2}, \ref{fig.3} and \ref{fig.4}: we investigated the transition to synchronization and the characterization of the different dynamical behavior, emerging for different coupling constants.

The calculation of the average order parameters $\bar{r}$, ${\bar{\xi}}^{1/2}$ as a function of the coupling constant reveals that, in this setup, it is more difficult to achieve synchronization,
due to the inhomogeneity of the natural frequencies (see Figure~\ref{fig.8}). The classical order parameter $\bar{r}$ is still irregular and unstable, irrespectively whether we use the original Kuramoto model or the pHDAE formulation,
while ${\bar{\xi}}^{1/2}$ is more stable and informative. If we concentrate on the behavior of ${\bar{\xi}}^{1/2}$, we observe a continuous transition to synchronization, instead than a jump from partial to full synchronization.
However, in comparison with the setup of Figure~\ref{fig.2}, a bigger coupling constant is needed in order to obtain the same level of synchronization.

\begin{figure}[ht!]
  \centering
  \includegraphics[width=0.5\textwidth]{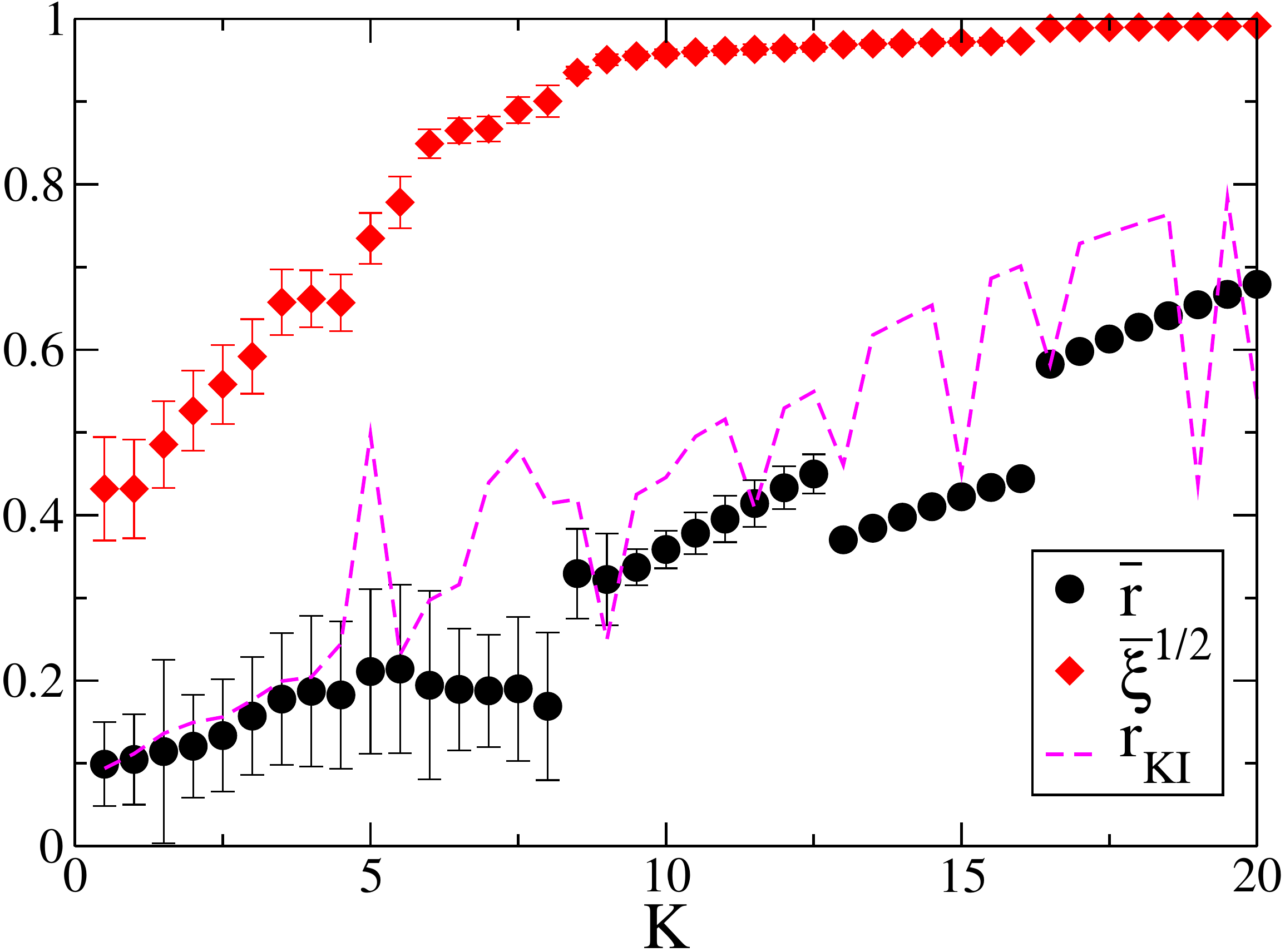}
  \caption{Average order parameters vs the coupling constant $K$, with Gaussian distributions of the $\Omega_j$.
  The integration schemes used are the same as in Figure~\ref{fig.2}.
  } \label{fig.8}
\end{figure}

The investigation of the transition in terms of the maximal Lyapunov exponent gives more insight into the transition point from chaotic behavior to synchronization. The critical value of the
coupling constant at which the system synchronizes is $K_c=8.75 \pm 0.25$: for $K>K_c$ the system is stable, as testified by the value $\lambda^M=0$, while for $K<K_c$ the system is chaotic
for a wide range of the coupling constant (much wider than in Figure~\ref{fig.3}).

\begin{figure}[ht!]
  \centering
  \includegraphics[width=0.5\textwidth]{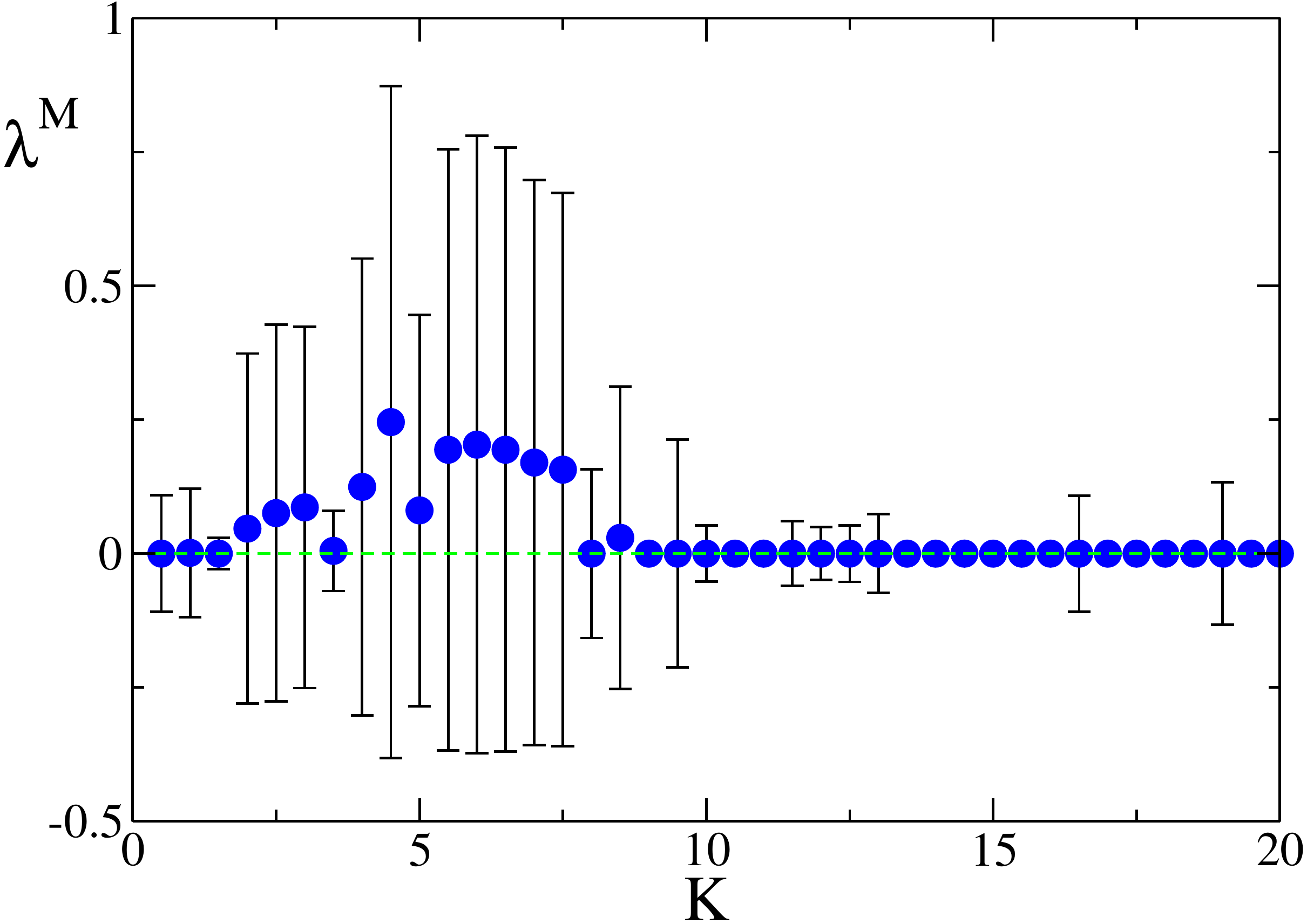}
  \caption{Maximal Lyapunov exponent vs the coupling constant $K$, with Gaussian distributions of the $\Omega_j$.
  The dashed green line represents the value $\lambda^M=0$.
  } \label{fig.9}
\end{figure}

Finally, the average phase velocities $\langle\dot{\theta}_i\rangle$ of all oscillators are reported monotonically in Figure~\ref{fig.10}.
For small values of the coupling constant $K$, the velocities are erratic and each oscillator moves independently from the others.
Starting around $K=3.5$, some clusters emerge in the network: by increasing the coupling constant, some oscillators tend to merge in larger and larger clusters, while others (especially the ones with higher absolute velocity) remain asynchronous.
By $K\geq 8.5$, most oscillators have assembled in few, big clusters.
In particular, for $K=8.5$ we have a $3$-cluster state plus a limited set of asynchronous oscillators, thus meaning that there are at least three effective degrees of freedom acting into the system and contributing to the dynamics.
With three degrees of freedom, it is possible to observe a chaotic motion (see \cite{WatS94}), as confirmed by the positive value of $\lambda^M$ for this coupling value.
On the other hand, by $K=9.5$ the system has merged into a $2$-cluster state, thus it is not possible to observe chaos anymore.
In particular, the dimension of these clusters is very asymmetric, and one cluster is much bigger then the other one, thus justifying the increased level of synchronization and the corresponding transition to the synchronized state.
By $K=13$, all oscillators have collapsed into a unique cluster, and we expect that to be the case for even larger coupling constants.

\begin{figure}[ht!]
  \begin{center}
  \includegraphics[width=0.8\textwidth]{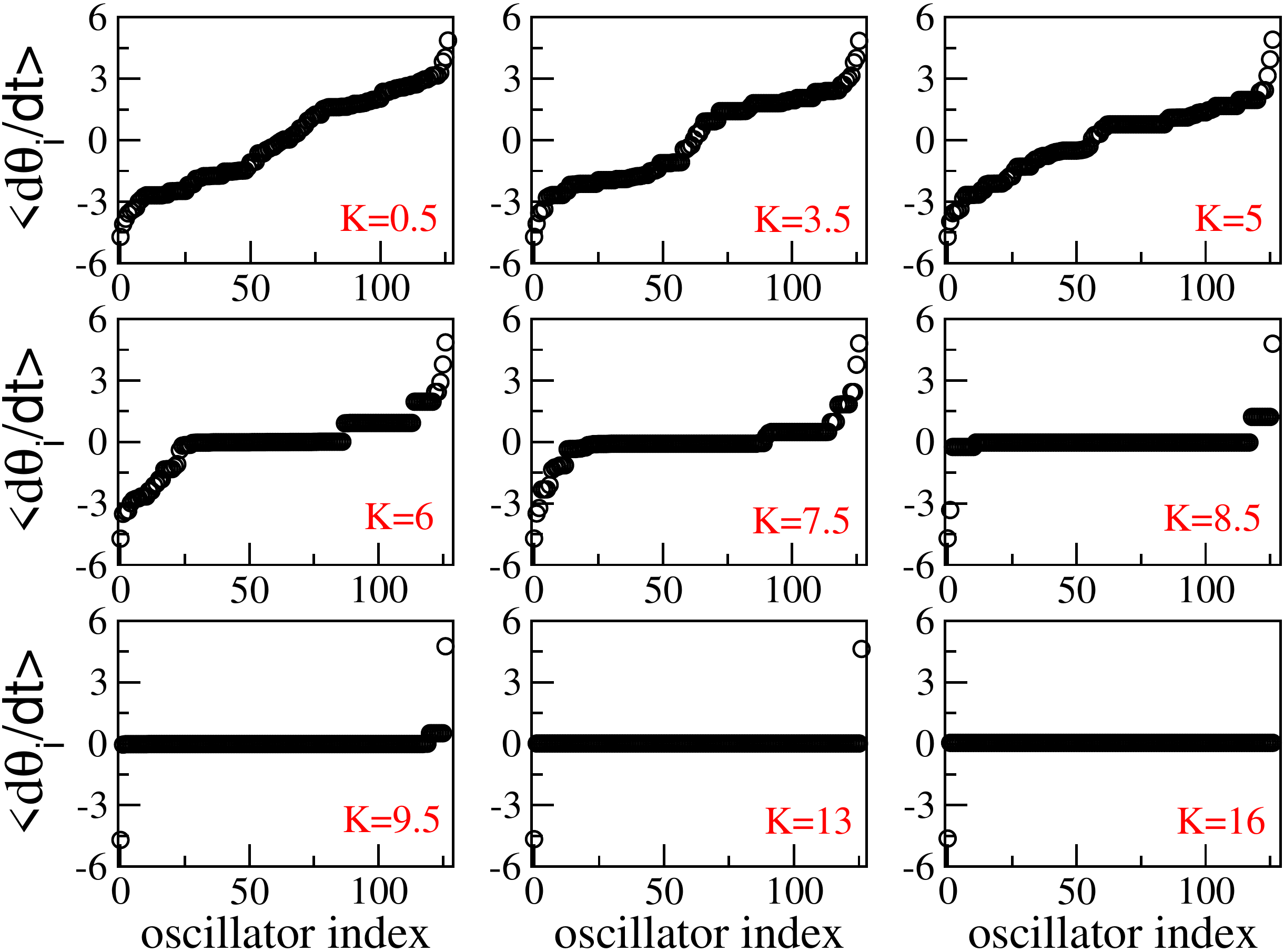}
  \end{center}
  \caption{Average phase velocity of each oscillator $\langle\dot{\theta}_i\rangle$ vs the oscillator index, for different values of the coupling K, with Gaussian distributions of the $\Omega_j$. The indices of the oscillators have been rearranged in such a way that the average phase velocities are monotonically increasing. The data have been obtained by changing $K$ adiabatically from $K=0$, with $\Delta K=0.5$.
  } \label{fig.10}
\end{figure}

\section{Summary and Outlook}
We have presented a new port-Hamiltonian differential-algebraic formulation of the Kuramoto model of coupled oscillators as well as a new definition of the order parameter. 
The new model has several advantages, it is easily extended to models of finer granularity as they are used in qualitative stability and synchrony analysis of power systems. 
The new order parameter is more robust in limiting situations. We have also illustrated the advantage of the port-Hamiltonian formulation in the preservation of conserved quantities. 
The new approach and its advantages have been illustrated with many numerical examples carried out for a semi-realistic model of the Italian power grid.

Future work will include the analysis of the whole model hierarchy, error control in adaptive time step and model selection, as well as model reduction techniques that allow real 
time stabilization and synchronization as well as the incorporation of switching in and time-delay in the model.

\section{Acknowledgement}
V.~M.~acknowledges the Deutsche Forschungsgemeinschaft via Project A2 within SFB 910 and the {\it Einstein Foundation Berlin} via the Einstein Center ECMath.
R.~M.~is supported by {\it Einstein Foundation Berlin} via the Einstein Center ECMath. 
S.~O.~and E.~S.~were supported by DFG via Project A1 in the framework of SFB 910.

\bibliographystyle{plain}
\bibliography{MehMOS}

\end{document}